\newtheorem{theorem}{Theorem}
\newtheorem{Lemma}{Lemma}
\newtheorem{lemma}[Lemma]{$\mathbf{Lemma}$}
\begin{document}
\title{ NOMA Assisted Wireless Caching: \\Strategies and Performance Analysis}

\author{ Zhiguo Ding, \IEEEmembership{Senior Member, IEEE}, Pingzhi Fan,   \IEEEmembership{Fellow, IEEE}, George K. Karagiannidis, \IEEEmembership{Fellow, IEEE}, Robert Schober, \IEEEmembership{Fellow, IEEE},  \\ and H. Vincent Poor, \IEEEmembership{Fellow, IEEE} \\({\it Invited Paper})\thanks{ 
This work was  presented  in part  at the {\it IEEE International Conference on Communications (ICC)}, Kansas City, MO, May 2018 \cite{Dingicc18}. 

    Z. Ding and H. V. Poor are  with the Department of
Electrical Engineering, Princeton University, Princeton, NJ 08544,
USA. Z. Ding
 is also  with the School of
Computing and Communications, Lancaster
University, Lancaster, UK (email: \href{mailto:z.ding@lancaster.ac.uk}{z.ding@lancaster.ac.uk}, \href{mailto:poor@princeton.edu}{poor@princeton.edu}).
 
P. Fan is with the Institute of Mobile
Communications, Southwest Jiaotong University, Chengdu, China (email: \href{mailto:pingzhifan@foxmail.com}{pingzhifan@foxmail.com}).

G. K. Karagiannidis is with the Department of Electrical and Computer
Engineering, Aristotle University of Thessaloniki, Thessaloniki, Greece (email: \href{mailto:geokarag@auth.gr}{geokarag@auth.gr}).

R. Schober is with the Institute for Digital Communications,
Friedrich-Alexander-University Erlangen-Nurnberg (FAU), Germany (email: \href{mailto:robert.schober@fau.de}{robert.schober@fau.de}).

  }\vspace{-3em}} \maketitle

\begin{abstract}
Conventional wireless caching assumes that content can be pushed to local caching infrastructure during off-peak hours in an error-free manner; however, this assumption  is not applicable if   local caches    need to be frequently updated via wireless transmission.  This paper investigates  a new approach to wireless caching   for the case when cache content has to be updated during on-peak hours.  Two non-orthogonal multiple access (NOMA) assisted  caching strategies are developed, namely the {\it push-then-deliver strategy} and the {\it push-and-deliver strategy}.   In the push-then-deliver strategy,    the NOMA principle is applied to push more content files to the content servers during a short time interval   reserved for content pushing in on-peak hours and to provide more connectivity for content delivery, compared to     the conventional orthogonal multiple access (OMA)    strategy. The push-and-deliver strategy is motivated by the fact that some users' requests cannot be accommodated locally and   the base station has to serve them directly. These events during the content delivery phase are exploited  as opportunities for content pushing, which further facilitates the frequent update of the files cached at the content servers.  It is also shown   that this  strategy can be straightforwardly extended to device-to-device   caching, and various   analytical results are developed to illustrate  the superiority  of the proposed  caching strategies compared to OMA based schemes.  
\end{abstract} 
\section{Introduction}
Recently non-orthogonal multiple access (NOMA) has received significant attention as a key enabling technique for future wireless networks  \cite{jsacnomaxmine, docom, teckuk}. The key idea of NOMA is to encourage spectrum sharing among mobile nodes, which not only improves the spectral efficiency but also ensures that massive connectivity can be effectively supported.   Practical   concepts for implementing the NOMA principle  for a single resource block, such as an orthogonal frequency division multiplexing (OFDM)  subcarrier,  include  power domain NOMA and cognitive radio (CR) inspired NOMA \cite{6692652,Nomading, Zhiguo_CRconoma}, which   provide  different tradeoffs between throughput and fairness. When each user is allowed to occupy multiple   subcarriers, dynamically  grouping the users on different subcarriers  is a challenging problem, and  various  multi-carrier  NOMA schemes, such as sparse code multiple access (SCMA) and pattern division multiple access (PDMA)  \cite{6666156, 7024798}, provide practical solutions for  achieving different   performance-complexity tradeoffs. Unlike single-carrier NOMA, in multi-carrier NOMA, a user's message is spread over multiple resource blocks, which requires efficient encoding schemes, such as multi-dimensional coding, to be implemented at the transmitter and low-complexity decoding schemes, such as   message passing algorithms, to be used at the receivers. 

NOMA has   been shown to be compatible  with many   other advanced  communication concepts. For example, several  features of millimeter-wave (mmWave) communications, such as   highly directional    transmission,  and  the mismatch between the users' channel vectors and the commonly  used finite resolution analog beamforming, facilitate the implementation of NOMA in mmWave networks \cite{Zhiguo_mmwave,7918554}. In addition, NOMA can  further improve the spectral efficiency of multiple-input multiple-output (MIMO) systems. For example,   MIMO-NOMA can efficiently  exploit  the spatial degrees of freedom of MIMO channels and, unlike single-input single-output (SISO) NOMA, is beneficial  even if the users have similar channel conditions \cite{7015589, Zhiguo_mimoconoma, jsacnoma2}. Furthermore, conventionally, when the users have a single antenna, cooperative transmission can be used to exploit spatial diversity but    suffers from a reduced  overall data rate, since relaying    consumes extra bandwidth resources \cite{Laneman04}. In this context, the application of NOMA   can efficiently reduce the number of   consumed bandwidth resource blocks, such as subcarriers and time slots, and hence improve the spectral efficiency of cooperative communications  \cite{Zhiguo_conoma, jsacnoma23, 7482799}.   Furthermore,   existing studies have also revealed a strong synergy between NOMA and CR networks, where the use of NOMA can significantly  improve the connectivity for the users of the secondary network  \cite{7398134}. 

Wireless caching    is another important enabling technique for future communication  networks \cite{6871674,6495773}, but little is known about the coexistence of NOMA and wireless caching. The key idea of wireless caching is to push the content in off-peak hours during the so-called  {\it content pushing phase} close to the users before it is requested,  and therefore, the users' requests can   be locally served during the so-called {\it content delivery phase}. In fact,   asking a base station (BS)   to serve the users' requests directly  is not preferable, not only because the maximal number of users that a BS can serve concurrently is small, but also because   non-caching transmission schemes  are severely constrained by the limited backhaul capacity   of wireless networks.     Most caching schemes  can be grouped into one of two classes  \footnote{We note that  coded caching, where the number of BS transmissions  is reduced by exploiting the structure of the content sent during the content pushing and delivery phases, does not fall into the two considered categories \cite{6763007,7932139}. } \cite{6495773, 7536874}. {\it The first class}  assumes the existence of a   content caching infrastructure, such as content servers, small cell BSs, etc., \cite{7488289,7565184,7828114}.  When   caching infrastructure (e.g., content servers) is available, the objective  in the content pushing phase is to  push the content files to the content servers in a timely and reliable  manner, before the users request these files. During the phase of content delivery, an ideal situation is that all the users' requests can be locally served, without communicating with the central controller of the network, e.g., the BS. {\it The second class},   also known as device-to-device (D2D) caching,   assumes that there is no dedicated caching infrastructure, and relies on user cooperation \cite{7342961,7932468}. Particularly, during the content pushing phase, all   users will proactively cache some content. During the content delivery phase,  a user will communicate with its BS only if none of its neighbours can help the user locally, i.e., the user cannot find its requested file in the caches of its surrounding neighbours. 

 A fundamental  assumption made in the existing caching literature   is that, in the content pushing phase, content is  pushed to the content servers in an \underline{error-free} manner during  \underline{off-peak} hours.  However, performing caching only during    off-peak hours is not effective  if the popularity of the content is rapidly changing or the files to be cached need to be frequently updated.  Typical examples for this type of content include up-to-the-minute news,    sports events requiring  live updates,   e-commerce promotion  with frequent pricing changes, newly released music videos,  etc. Similarly, assuming   error-free content pushing   may also be questionable   in many practical communication scenarios.    In practice, connecting the content servers wirelessly with the BS is preferable since   the cost for setting up the network is reduced  and  the installation of cables is avoided. Furthermore, wireless networks facilitate  D2D caching, since file sharing among users in wireless networks is straightforward, whereas realizing pairwise connections in wireline   networks is more difficult.  However, wireless transmission is prone to noise,   distortion, and attenuation, which makes error-free transmission a very strong assumption in practice. The objective  of this paper is to apply the NOMA principle to wireless caching and to develop NOMA assisted caching strategies which do not require  the aforementioned   assumptions.  The contributions of the paper are summarized as follows:

\begin{itemize}
\item  For the case where the content pushing and delivery phases are separated and limited bandwidth resources are periodically available for content pushing in on-peak hours, a NOMA-assisted push-then-deliver strategy is proposed. Particularly, during the content pushing phase, the BS will use the NOMA principle and push multiple files to  the  content servers simultaneously.  A CR inspired NOMA power allocation policy is used to ensure that content files are delivered to  their target content servers with the same outage probability as   with conventional orthogonal multiple access (OMA) based transmission. However, by using NOMA,   additional files can be  pushed to the content servers simultaneously,  which   is important to efficiently use the limited resources   reserved for content pushing and hence to improve the cache hit probability. During the content delivery phase, the use of NOMA not only improves the reliability of content delivery, but also ensures that more user requests can be served concurrently by a content server. 

\item  The objective of the proposed push-and-deliver strategy is     to provide additional bandwidth resources   for content pushing.  Unlike the push-then-deliver strategy, the push-and-deliver strategy   seeks opportunities for content pushing during the content delivery phase. In particular,   during the content delivery phase, the BS  occasionally has to  serve some  users directly, since these users' requested files cannot be found in the local content severs. Conventionally, this is a non-ideal situation which reduces the spectral efficiency. Nevertheless,  this non-ideal situation is inevitable in practice and is expected
to occur frequently, as the users' requests cannot be perfectly predicted. In this paper, this non-ideal situation for  content delivery   is exploited  as an opportunity  for additional content pushing. In other words, the push-and-deliver strategy is particularly useful  when  the bandwidth resources   reserved for content pushing are limited, but the files at the content servers need to be frequently  updated. The proposed push-and-deliver strategy is  also   extended to   D2D caching   without caching infrastructure, where the caches at the D2D helpers can be refreshed while users are directly served by the base station.   We note  that the NOMA-multicasting scheme proposed in \cite{jsacnoma22} can be viewed as a D2D special case of the  proposed push-and-deliver strategy, if the multicasting phase   in  \cite{jsacnoma22} is viewed as the content delivery phase. However,  the impact of integrating content pushing and delivery on the cache hit probability was not   investigated  in \cite{jsacnoma22}. 

\item Analytical results for the cache hit probability,  the transmission outage probability, and the D2D cache miss probability are derived  in order to obtain a better  understanding of the performance of the proposed caching strategies.  Conventionally, the cache hit probability is mainly determined by the size of the caches of the content servers, instead of by   transmission outages, since conventional  content pushing is carried out during  off-peak hours, which means that the amount of the pushed content is much larger than the size of the caches of the content servers. However, for the schemes proposed in this paper,   the outage based cache hit probability is a more suitable metric  for performance evaluation, as explained in the following. In particular,  we assume that   a short time interval  is periodically reserved  during on-peak hours for   pushing new content   to the content servers. The  time interval  reserved for content pushing has to be short in order to achieve high  spectral efficiency  and has to be   shared by multiple  content servers. Thus, the amount of content   that can be  pushed to the content servers may be much  smaller than the size of the storage of the content servers. Considering   the tremendous increase in storage capacity available with current technologies, this is a realistic assumption. For the considered caching scenario, the crucial issue is how to quickly push the content files to the content servers during the short time interval available for content pushing during on-peak hours. Therefore,  the outage based cache hit probability is the relevant   performance criterion. When   caching infrastructure is available, the impact of NOMA on the content pushing phase is quantified by exploiting  the   joint probability density function (pdf) of the distances   between  the content servers and the BS, and    closed-form expressions for  the achieved cache hit probability are developed. The impact of NOMA on the content delivery phase is investigated by using the transmission outage probability as a performance  criterion and modelling the locations of the users and the content servers as    Poisson cluster processes (PCPs). Furthermore, the impact of NOMA on D2D caching is studied by modelling the effect of content pushing as a thinning Poisson point process and deriving  the cache miss probability, i.e., the probability  of the event that a user cannot find its requested file in the caches of its neighbours. The provided simulations   verify the accuracy of the proposed analysis, and illustrate the effectiveness of the proposed NOMA based wireless caching schemes.  
\end{itemize}

The remainder  of the paper is organized as follows. In Section \ref{Section system model}, the considered system model, including the   caching model and the   spatial model, are introduced. In Section \ref{section pthen d}, the NOMA-assisted push-then-deliver strategy is presented, and its impact on the content pushing and delivery phases is investigated. In Section \ref{section cand d}, the proposed push-and-deliver strategy is developed  by efficiently  merging the content pushing and delivery  phases, its  impact   on the cache hit probability is investigated, and  its extension to D2D scenarios is   discussed. Computer simulations   are provided in Section \ref{section simulation}, and the paper is concluded in Section \ref{section conclusions}. The details of all  proofs are collected in the appendix. 

\section{System Model}\label{Section system model}
Consider a two-tier heterogeneous  communication scenario, where  multiple users request cacheable content with the help of   one   BS and multiple content servers. The D2D   scenario without caching infrastructure, e.g., content servers, will be described in Section \ref{section cand d}.C. Assume that each user is associated with a single content server. If the file requested by a user can be found in the cache of its associated content server, this server will serve the user, which means that   multiple content servers can communicate with their respective  users concurrently and hence  the spectral efficiency is high. However, if the file requested by a user cannot be found locally, the BS will serve the user directly, a situation which is not ideal for caching and should be avoided. The assumption that each user is associated with a single content server   facilitates the use of PCP modelling, as discussed in the following subsection.

\subsection{Spatial  Clustering  Model}
Assume  that the   BS is located in the origin of a two-dimensional Euclidean plane, denoted by $\mathbb{R}^2$.  As shown in Fig. \ref{fig01}, there are multiple content servers. The locations of the content servers  and the users are modelled as  PCPs. In particular, assume that the locations of the content servers  are denoted by $x_i$ and are modelled as a homogeneous Poisson point process (HPPP), denoted by $\Phi_c$, with density $\lambda_c$, i.e., $x_i\in \Phi_c$. For   notational simplicity, the   location of the BS is denoted by $x_0$. 

Each content server  is the parent node of a cluster covering a disk whose radius is denoted by  $\mathcal{R}_c$. Denote the content server  in cluster $i$ by $\text{CS}_i$. Without loss of generality, assume that there are $K$ users associated with $\text{CS}_i$,  denoted by $\text{U}_{i,k}$. Note that   users associated with the same content server  are viewed as offspring nodes \cite{Haenggi}. The   offspring nodes      are uniformly distributed in the disk associated with $\text{CS}_i$, and their locations are  denoted by $y_{i,k}$. To simplify the notation, the locations of the cluster users are conditioned on the locations of their cluster heads (content severs). As such, the distance from a   user to its content server   is simply given  by $||y_{i,k}||$, and the distance from user $\text{U}_{i,k}$ to content server  $\text{CS}_j$ is denoted by $||y_{i,k}+x_i-x_j||$ \cite{ 5560889, 7110502}.

\begin{figure}[!htbp]\centering\vspace{-1em}
    \epsfig{file=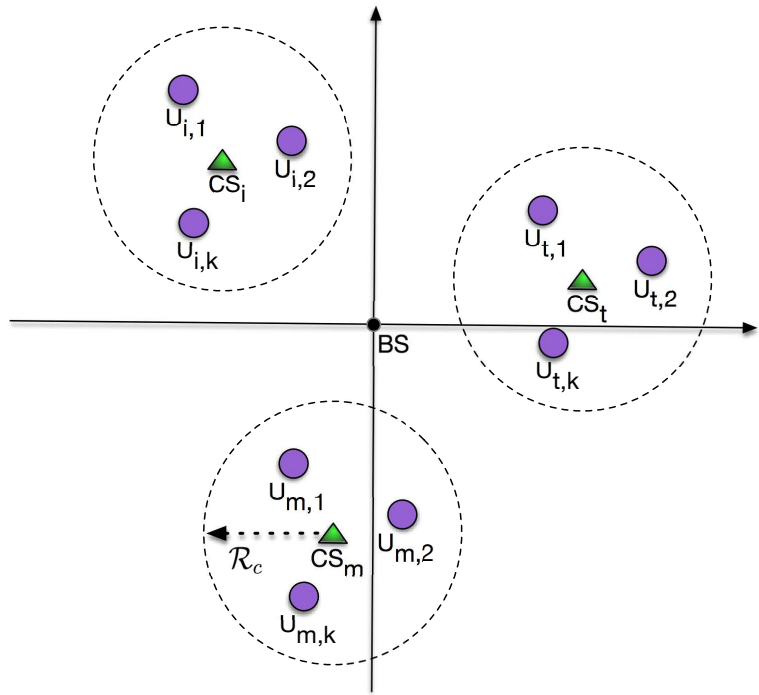, width=0.64\textwidth, clip=}\vspace{-1em}
\caption{ An illustration of the assumed spatial model. \vspace{-1em} }\label{fig01}
\end{figure}

\subsection{Caching Assumptions}
Consider that the files to be requested by the users are collected in a finite content library $\mathcal{F}=\{f_1, \cdots, f_F\}$. 
The popularity of the requested files is modelled by a Zipf distribution \cite{6600983}. Particularly,   the popularity of   file $f_i$, denoted by $\mathrm{P}(f_l)$,  is modelled  as follows:
\begin{align}\label{popular}
\mathrm{P}(f_l) = \frac{\frac{1}{l^{\gamma}}}{\sum^{F}_{p=1}\frac{1}{p^\gamma}},
\end{align}
where $\gamma>0$ denotes the shape parameter defining  the content popularity skewness. We note that $\mathrm{P}(f_l)$ is the probability  that a user   requests  file $f_i$. Similar to the existing   wireless caching literature, 
\cite{6871674,6495773,   7488289,7565184,7828114},   packets belonging to different files  are assumed to have the same length. However, unlike the existing literature, we do not  assume that the amount of information contained in  the  packets of different files  is identical.  Particularly, the prefixed data rate of  the  packets of file $f_l$ is denoted by   $R_l$.  We assume that    packets belonging to different files have the same size but may contain different amounts of information for the following reasons. {\it Firstly,}   the packet size is typically predefined according to practical system standards and cannot be changed.  Therefore, it is reasonable to   assume that all  packets have the same size. {\it Secondly}, packets belonging to different files have different priorities and different target  reception reliabilities, which requires  the use of different channel coding rates for  different packets. As a result,  packets which have the same size do not necessarily contain the same amount of information. We note that all  analytical results developed in this paper, except for Lemma \ref{lemma2}, do  not require  the assumption that the packets contain different amounts of information. In fact,   the performance for the special case of identical   target data rates for all   files is investigated   in the simulation section. 

Finally, we assume that the  BS has access to all   files. In this paper, when content servers exist,   we assume that the users have no caching capabilities.  On  the other hand, for the D2D assisted caching  discussed in Section \ref{section cand d}.C, it is assumed that each user has a cache.  

\section{ Push-then-deliver Strategy   } \label{section pthen d}

In this  section, we  consider the case where  the two caching phases, content pushing and content delivery, are separated. Unlike conventional caching which relies on the use of off-peak hours for content pushing,   the proposed push-then-deliver strategy assumes that   limited  bandwidth resources, such as  short time intervals, are periodically reserved for   pushing new files to the content servers  during on-peak hours. For example, every   hour, a BS deployed in a large  shopping mall or an airport may use  a few seconds  to push   updated advertising and marketing videos to the content servers.   The time interval  reserved for content pushing during on-peak hours has to be short in order to achieve  high system spectral efficiency. As will be shown,   the proposed push-then-deliver strategy allows  more files to be pushed within this short time interval compared to OMA.  

 In the following two subsections, we will demonstrate the impact of the NOMA principle on the content pushing and delivery phases, respectively. 

\begin{figure}[!htp]\vspace{-1em}
\begin{center} \subfigure[ A general illustration of the  content pushing phase ]{\label{fig 1x a}\includegraphics[width=0.49\textwidth]{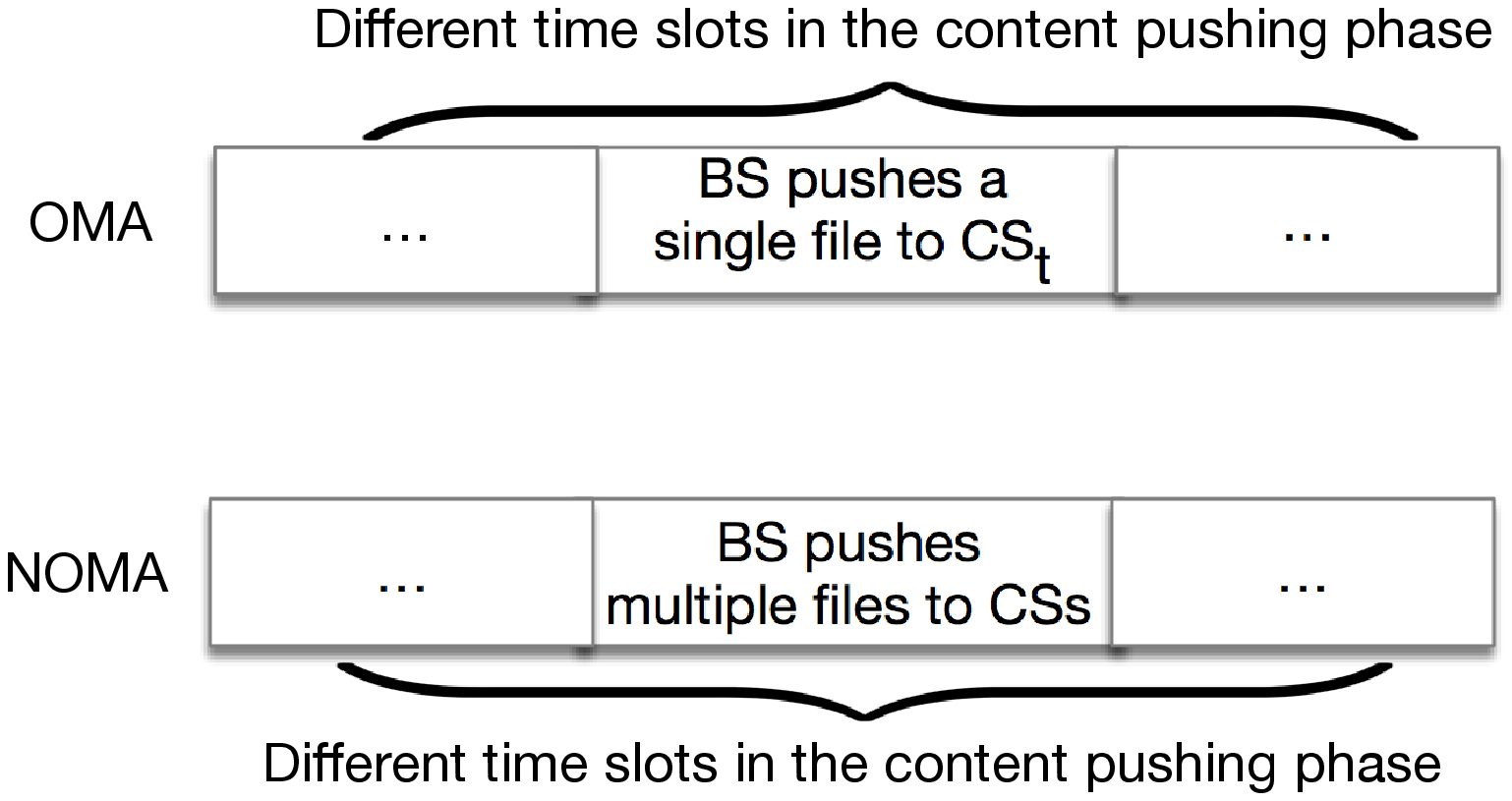}}
\subfigure[An example for the outcome of content pushing ]{\label{fig 1x b}\includegraphics[width=0.49\textwidth]{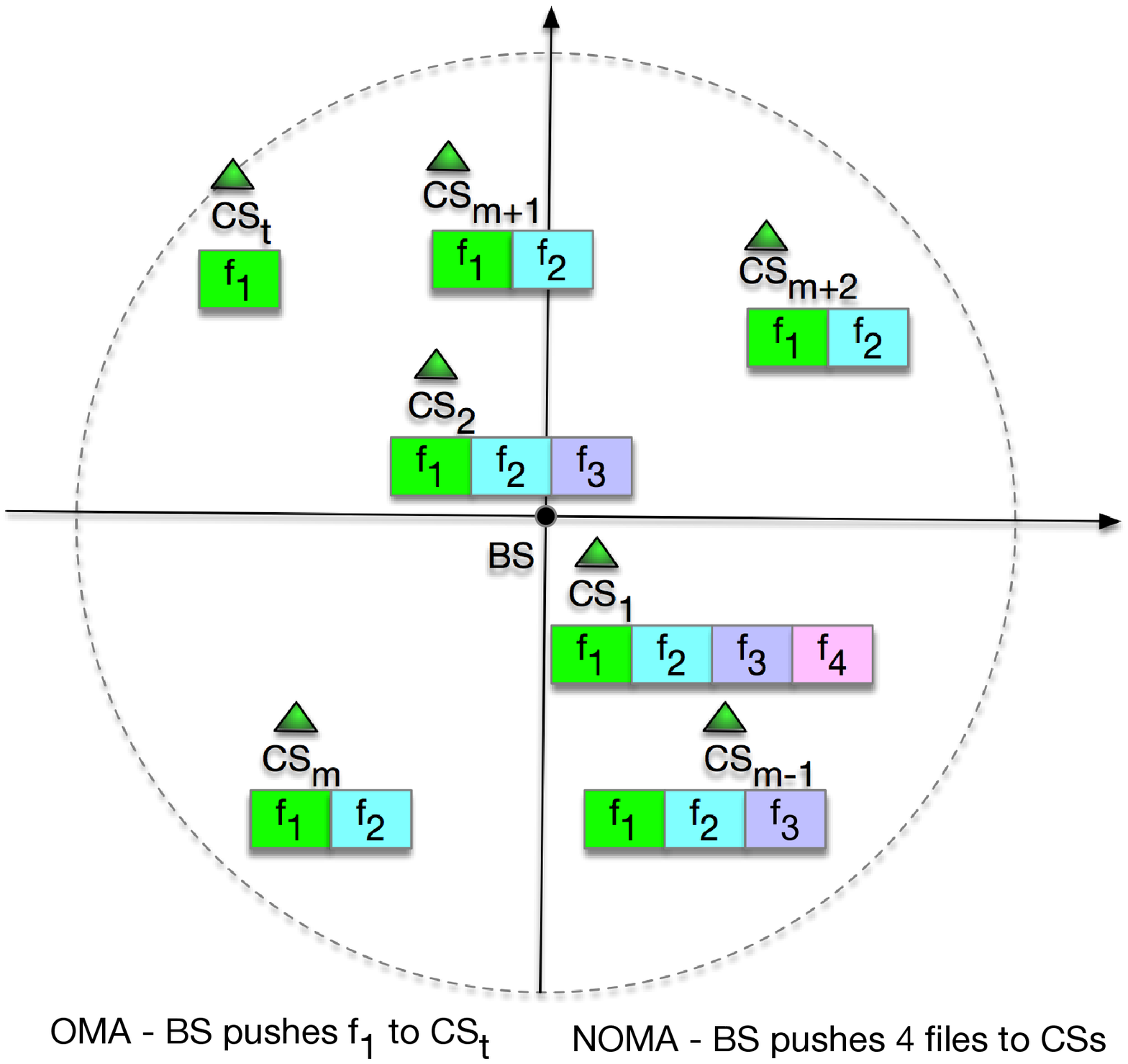}}\vspace{-1em}
\end{center}
\caption{  An illustration of the impact of NOMA on content pushing.  ``CS" denotes a content server in the subfigures.  For the example shown in  subfigure (b),  it is assumed that $\text{CS}_m$ is closer to the BS than  $\text{CS}_t$,  for $1< m< t $.   In OMA, a single file is pushed to $\text{CS}_t$, and in NOMA,   the BS pushes a superimposed mixture consisting of four  files, where  content servers closer to the BS are likely able to decode   more  pushed files.   }\label{fig01x}\vspace{-0.5em}
\end{figure}
  
\subsection{Content Pushing Phase}\label{subsection IIIA}
 In order to have a baseline for  the performance of NOMA assisted content pushing,   conventional OMA based content pushing   is introduced first. 
\subsubsection{OMA  Based Content Pushing}
 In OMA, the BS divides the content pushing phase into multiple time slots, and different content servers are served during different time slots,  as shown in Fig. \ref{fig 1x a}.     Without loss of generality, we focus on a time slot in which    the BS  needs to push a file to $\text{CS}_t$.  Since the use of OMA means  that only a single file can be pushed during this time slot, the BS will push the most popular file $f_1$ to $\text{CS}_t$\footnote{ We note that how  content is   pushed over the limited number of     time slots available for content pushing during on-peak hours  depends on the file scheduling strategy. One possible  strategy is to divide the popular content into different libraries or different sets, where the most popular files in these libraries or sets will be pushed during those  time slots. As a result, the broadcast nature of wireless channels  can be exploited, and   a   transmission  pushing  a file  to an intended content server  also benefits   other content servers which are     interested in the same content and have stronger connections to the BS than the intended content server.  An interesting   direction for future research is the design of   sophisticated algorithms for file   scheduling  during  the short time interval    available  for content pushing in on-peak hours.}   . 

Therefore,  $\text{CS}_t$   is able to   decode   file $f_1$ with the following achievable data rate: 
\begin{align}
R^{CP}_{t,OMA} = \log\left(1+\rho \frac{1}{{L\left(||x_t-x_0||\right)}}\right),
\end{align}
where $\rho$ denotes the transmit signal-to-noise ratio (SNR),  and $\frac{1}{L\left(||x_{t}||\right)}$ is the large scale path loss between $\text{CS}_t$ and the BS located at $x_0$. Particularly, the following path loss model is used, $\frac{1}{L\left(||x_{t}||\right)}$, where $L\left(||x_{t}||\right)=||x_{t}||^\alpha$ and  $\alpha$ denotes the path loss exponent. For a large scale  network, the probability   that  $||x_t-x_0||<1$ is very small, and therefore, the simplified unbounded path loss model is used in this paper \cite{4086349, 5560889, 7110502}. Nevertheless,  the presented  analytical results can be  extended to other path  loss models, e.g., $L\left(||x_{t}||\right)=||1+x_{t}||^\alpha$ or $L\left(||x_{t}||\right)=\max\{1, ||x_{t}||^\alpha\}$,  in a straightforward manner.   We note that small scale multi-path fading is not considered for the channel gain associated with $\text{CS}_t$ since the content servers can be deployed such that  line-of-sight connections to the BS are ensured, which means that large scale path loss is the dominant factor for    signal attenuation.  However, small scale fading   will be considered for the channel gains associated with the users, since  the users may not  have line-of-sight connections to their transmitters.

\subsubsection{NOMA Assisted Content Pushing} \label{subsection x}
By applying the concept of NOMA, more content can be simultaneously  delivered from the BS to the content servers, as illustrated  in Fig. \ref{fig 1x b}. Particularly,     the   BS sends the following mixture, which contains the $M_s$ most popular files:
\begin{align}
s_i = \sum^{M_s}_{i=1}\alpha_i \bar{f}_{i} ,
\end{align}
where $\bar{f}_{i}$ denotes the signal which represents the information contained in file $f_i$, $\alpha_i$ denotes the real-valued power allocation coefficient and $\sum^{M_s}_{i=1}\alpha_i^2=1$. The content servers carry out successive interference cancellation (SIC). The SIC decoding order is determined  by the priority of the files, i.e., a more popular file, $f_i$, will be decoded before a less popular file, $f_{j}$, $i<j$. Suppose that the files $f_j$, $j<i$, have been decoded and subtracted  correctly by content server $\text{CS}_m$. In this case,  $\text{CS}_m$ can decode the next most popular  file, $f_i$, with the following data rate:
\begin{align}
R^{CP}_{m,i} =&  \log\left(1+\frac{\rho \alpha_i^2\frac{1}{{L\left(||x_m-x_0||\right)}}}{\rho \frac{1}{{L\left(||x_m-x_0||\right)}}\sum^{M_s}_{j=i+1}\alpha_j^2+1}\right).
\end{align}
If $R^{CP}_{m,i}\geq R_i$,  then  file $f_i$ can be decoded and subtracted correctly at $\text{CS}_m$.
 
 For a fair  comparison  with   OMA,  which  pushes only one file  at a time, a sophisticated  power allocation policy is   needed for the NOMA scheme. Without loss of generality, we assume that the content servers are ordered as follows:
 \begin{align}
\frac{1}{{L\left(||x_1-x_0||\right)}}  \cdots \geq\frac{1}{{L\left(||x_m-x_0||\right)}} \geq \cdots \geq \frac{1}{{L\left(||x_t-x_0||\right)}}\geq \cdots ,
 \end{align}
 for $1\leq m < t$. Furthermore, since the considered time slot is used to push $f_1$ to $\text{CS}_t$, we make the following quality of service (QoS) assumption, in order to facilitate  the design of the power allocation coefficients:
 \[
 \textit{QoS   Target: The most popular file, $f_1$, needs to reach the $t$-th nearest content server ($\text{CS}_t$).} 
 \]
 Both the OMA and NOMA transmission schemes need to ensure this QoS target. Therefore, the  CR inspired power allocation policy can be used for NOMA \cite{Zhiguo_CRconoma}, i.e.,   power allocation coefficient  $\alpha_1$ is chosen such  that $f_1$ can be delivered reliably to  $\text{CS}_t$, i.e.,   
\begin{align}
R^{CP}_{t,1} \geq R_1.
\end{align}
 This constraint  results in the following choice of $\alpha_1$:
\begin{align}\label{c1}
\alpha_1^2=\min \left\{1,  \frac{\epsilon_1\left(\rho \frac{1}{{L\left(||x_t-x_0||\right)}} +1\right)}{\rho (1+\epsilon_1)\frac{1}{{L\left(||x_t-x_0||\right)}} }\right\},
\end{align}
where $\epsilon_l=2^{R_l}-1$. The use of the power allocation policy in \eqref{c1}   ensures that the outage probability for pushing file, $f_1$, to $\text{CS}_t$  is the same as that for OMA. The reason  is that if there is an outage in OMA,  $\alpha_1$ becomes one, i.e., all the power is allocated to $f_1$. Or in other words,  additional files are pushed in NOMA only if  $f_1$ is pushed to $\text{CS}_t$ successfully. 

Since $\sum^{{M_s}}_{j=1}\alpha_j^2=1$,   \eqref{c1}   implies that the sum of the power allocation coefficients, excluding $\alpha_1$, is constrained as follows: 
\begin{align}\label{c2}
\sum^{{M_s}}_{j=2}\alpha_j^2=\max \left\{0,  \frac{ \rho \frac{1}{{L\left(||x_t-x_0||\right)}}  -\epsilon_1}{\rho (1+\epsilon_1)\frac{1}{{L\left(||x_t-x_0||\right)}} }\right\}.
\end{align}
The constraint in \eqref{c1} is sufficient to guarantee the successful delivery of $f_1$ to the $t$-th nearest  content server.  How the remaining power shown in \eqref{c2} is allocated to the other files, $f_i$, $i\neq 1$, does not affect the delivery of $f_1$. Therefore, in this paper, it is assumed that the portion allocated to $f_i$, $i\neq 1$, is fixed, i.e., $\alpha_i^2=\beta_i P_r$, where $P_r=\max \left\{0,  \frac{ \rho \frac{1}{{L\left(||x_t-x_0||\right)}}  -\epsilon_1}{\rho (1+\epsilon_1)\frac{1}{{L\left(||x_t-x_0||\right)}} }\right\}$ and   $\beta_i $ are constants,  which satisfy the constraint $\sum^{{M_s}}_{j=2}\beta_i=1$. Note that the coefficients, $\beta_i $, indicate how the remaining transmission power, after   the power for  $f_1$ has been  deducted is allocated to the additional files. 

\subsubsection{Performance Analysis}
An important  criterion for evaluating   content pushing is  the cache hit probability which is the probability that, during the content delivery phase, a user finds its requested file in the cache of its associated content server\footnote{ We note that retransmission for content pushing, where after decoding failure, $\text{CS}_t$ requests the retransmission of file $f_1$ by the BS, is not considered in this paper. However, investigating  the impact of file retransmission on the caching performance is an important direction for future research. }. Since the request probability for file $l$ is decided by its popularity, the hit probability for a user associated with $\text{CS}_m$ can be expressed as follows:
\begin{align}\label{hit pr}
\mathrm{P}^{hit}_{m} = \sum^{M_s}_{i=1}\mathrm{P}(f_i) (1-\mathrm{P}_{m,i}),
\end{align}
where $\mathrm{P}_{m,i}$ denotes the outage probability of $\text{CS}_m$ for decoding  file $i$. Note that for the OMA case, only file $f_1$ will be sent, and hence the corresponding OMA hit probability is simply given by 
\begin{align}
\mathrm{P}^{hit}_{m,OMA} =  \mathrm{P}(f_1) (1-\mathrm{P}^{OMA}_{m,1}),
\end{align}
where $\mathrm{P}^{OMA}_{m,1}$ denotes the outage probability of $\text{CS}_m$ for decoding  file $f_1$. The following theorem reveals the benefit of using NOMA for content pushing. 

\begin{theorem}\label{theorem1}
The cache hit probability achieved by the proposed  NOMA assisted push-then-deliver strategy  is always larger than or at least equal to that of the conventional OMA based strategy, i.e., 
\begin{align}\label{theorem eq}
\mathrm{P}^{hit}_{m}\geq \mathrm{P}^{hit}_{m,OMA} ,
\end{align} 
for $1\leq m\leq t$. 
\end{theorem}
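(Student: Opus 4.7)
The plan is to decompose the NOMA cache hit probability into the contribution from file $f_1$ and from the additional files $f_2,\dots,f_{M_s}$, and then to show that (i) the $f_1$-term exactly equals the OMA hit probability, and (ii) the remaining terms are trivially non-negative. So the theorem will reduce to a one-line inequality once a key event-equivalence is established.

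The central step is to show that, for $1\le m\le t$, the decoding event for $f_1$ at $\text{CS}_m$ is the same under NOMA and OMA, i.e., $\mathrm{P}_{m,1}=\mathrm{P}^{OMA}_{m,1}$. Writing $g_j=1/L(\|x_j-x_0\|)$ so that $g_1\ge\cdots\ge g_m\ge\cdots\ge g_t$ by the assumed ordering, the OMA condition for success at $\text{CS}_m$ is $\rho g_m\ge\epsilon_1$. The NOMA condition $R^{CP}_{m,1}\ge R_1$ rearranges to
\begin{equation*}
\alpha_1^2\;\ge\;\frac{\epsilon_1(\rho g_m+1)}{\rho(1+\epsilon_1)g_m},
\end{equation*}
whose right-hand side is a decreasing function of $g_m$. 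I would then split into the two cases that arise in \eqref{c1}. In case (a), $\rho g_t\ge\epsilon_1$: the allocation $\alpha_1^2=\frac{\epsilon_1(\rho g_t+1)}{\rho(1+\epsilon_1)g_t}$ saturates the threshold at $g_t$, and since $g_m\ge g_t$ the NOMA threshold at $g_m$ is also satisfied; simultaneously $\rho g_m\ge\rho g_t\ge\epsilon_1$, so OMA succeeds as well. In case (b), $\rho g_t<\epsilon_1$: \eqref{c1} forces $\alpha_1^2=1$, so NOMA collapses onto OMA transmission and the two schemes succeed or fail together at $\text{CS}_m$. Combining the two cases gives $\mathrm{P}_{m,1}=\mathrm{P}^{OMA}_{m,1}$ as events, hence also as probabilities after averaging over the location process $\Phi_c$.

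With this identity, the conclusion is immediate from \eqref{hit pr}:
\begin{equation*}
\mathrm{P}^{hit}_{m}=\mathrm{P}(f_1)\bigl(1-\mathrm{P}_{m,1}\bigr)+\sum_{i=2}^{M_s}\mathrm{P}(f_i)\bigl(1-\mathrm{P}_{m,i}\bigr)=\mathrm{P}^{hit}_{m,OMA}+\sum_{i=2}^{M_s}\mathrm{P}(f_i)\bigl(1-\mathrm{P}_{m,i}\bigr),
\end{equation*}
and each summand in the remaining sum is non-negative, yielding \eqref{theorem eq}.

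The main obstacle is the case analysis for the CR-inspired power allocation: one must verify carefully that a coefficient $\alpha_1^2$ designed so that $\text{CS}_t$ just meets its QoS target is simultaneously sufficient for every nearer server $\text{CS}_m$, and that the fallback $\alpha_1^2=1$ in the non-feasible regime cleanly reduces NOMA to OMA. Once this event-level equivalence is in hand, no explicit evaluation of $\mathrm{P}_{m,i}$ for $i\ge 2$ (which could in principle be intricate because of SIC error propagation) is needed: the additional files only contribute non-negatively, so the inequality follows.
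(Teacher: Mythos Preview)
Your proposal is correct and follows essentially the same approach as the paper: both reduce the theorem to the event-level identity $\mathrm{P}_{m,1}=\mathrm{P}^{OMA}_{m,1}$ via a case split on whether the CR allocation in \eqref{c1} is saturated ($\alpha_1^2=1$, i.e., $\rho g_t<\epsilon_1$) or not, and then invoke non-negativity of the extra $i\ge 2$ terms. The paper phrases the case split as $\mathrm{P}(\alpha_1=1,\cdot)+\mathrm{P}(\alpha_1<1,\cdot)$ and simplifies the resulting probabilities, whereas you argue directly at the event level using the monotonicity of the threshold in $g_m$; these are cosmetic differences only.
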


\begin{proof}
See  Appendix A. 
\end{proof}
 {\it Remark 1:} Theorem \ref{theorem1} clearly demonstrates   the superior performance of the proposed NOMA assisted caching strategy compared to  OMA based caching. This performance gain originates from  the fact that multiple content files are pushed concurrently during the content pushing phase.

{\it Remark 2:} Only the $t$ nearest content servers are of interest in \eqref{theorem eq}, i.e., $1\leq m\leq t$, which is due to our   assumption that the BS aims to push the most popular file, $f_1$,  to $\text{CS}_t$. 

{\it Remark 3:} As shown in Appendix A,  the key step to prove the theorem is to show     $\mathrm{P}^{OMA}_{m,1}=\mathrm{P}_{m,1}$, i.e., the   outage performance of NOMA for decoding $f_1$ at $\text{CS}_m$ is the same as that of OMA. If $f_1$ is viewed as the message to the primary user in a CR NOMA system, this observation about the equivalence between the  outage performances of   NOMA and OMA is consistent with the results  in \cite{Zhiguo_CRconoma} and \cite{Zhiguo_mimoconoma}. 

While the use of the CR power allocation policy guarantees that  $\text{CS}_t$ can decode $f_1$, this also implies  that the outage performance at $\text{CS}_m$ depends on the channel conditions of $\text{CS}_t$. This means that for   calculation of the outage probability, $\mathrm{P}_{m,i}$,    the joint distribution of the ordered distances  of  $\text{CS}_t$ and $\text{CS}_m$ to the BS is needed. The following lemma provides an analytical  expression for this joint distribution.  

\begin{lemma}\label{lemma1}
Denote the distance between the BS and the $i$-th nearest content server by $r_i$. 
The joint pdf of $r_m$ and $r_t$ is given by 
\begin{align}
f_{r_m,r_t}(x,y) =& 4y(\lambda_c\pi)^{t} e^{-\lambda_c \pi y^2}  \frac{  x^{2m-1}(y^2-x^2)^{t-m-1}}{(t-m-1)!(m-1)!}    . 
\end{align}
\end{lemma}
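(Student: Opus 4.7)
The plan is to derive the joint pdf directly from the defining property of an HPPP: in disjoint Borel regions the point counts are independent Poissons whose means equal $\lambda_c$ times the area. Since the lemma concerns only the radial distances of the ordered points, it suffices to decompose the plane into three concentric regions determined by the infinitesimal radii $x$ and $y$, and count events in each.

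First, I would make precise the event $\{r_m\in[x,x+dx],\, r_t\in[y,y+dy]\}$ for $0<x<y$. By the definition of the ordered distances, this event is equivalent to the conjunction of four independent sub-events: (i) exactly $m-1$ points of $\Phi_c$ lie in the open disk $B(0,x)$; (ii) exactly one point lies in the thin annulus of radii $x$ to $x+dx$; (iii) exactly $t-m-1$ points lie in the annulus between radius $x$ and radius $y$; and (iv) exactly one point lies in the thin annulus of radii $y$ to $y+dy$. Independence of these four counts follows because the four regions are disjoint.

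Next, I would write each probability using the Poisson pmf. For (i) the mean is $\lambda_c\pi x^2$, so the probability is $\frac{(\lambda_c\pi x^2)^{m-1}}{(m-1)!}e^{-\lambda_c\pi x^2}$. For (iii) the mean is $\lambda_c\pi(y^2-x^2)$, giving $\frac{(\lambda_c\pi(y^2-x^2))^{t-m-1}}{(t-m-1)!}e^{-\lambda_c\pi(y^2-x^2)}$. The infinitesimal events (ii) and (iv) contribute $\lambda_c\cdot 2\pi x\,dx$ and $\lambda_c\cdot 2\pi y\,dy$ respectively (keeping only first-order terms, since the probability of two or more points in an infinitesimal annulus is $o(dx)$ and $o(dy)$). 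Multiplying and collecting the exponentials gives a combined factor $e^{-\lambda_c\pi y^2}$, while grouping the powers of $\lambda_c\pi$ yields $(\lambda_c\pi)^{(m-1)+(t-m-1)+2}=(\lambda_c\pi)^t$. The remaining polynomial factor becomes $4xy\cdot x^{2m-2}(y^2-x^2)^{t-m-1}=4y\,x^{2m-1}(y^2-x^2)^{t-m-1}$, divided by $(m-1)!(t-m-1)!$.

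Dividing by $dx\,dy$ then yields exactly the claimed expression for $f_{r_m,r_t}(x,y)$ on the region $0<x<y$, with the density vanishing otherwise. There is no real obstacle here: the argument is a standard order-statistics computation for a planar Poisson process, and the only care required is in verifying that the three annular regions are disjoint (so that the Poisson counts factor) and that the first-order terms in $dx$ and $dy$ are correctly tracked. A brief sanity check would be to marginalize over $x$ or integrate $y$ out to recover the well-known marginal densities of $r_m$ and $r_t$ given by the generalized gamma form $f_{r_k}(r)=\frac{2(\lambda_c\pi)^k r^{2k-1}}{(k-1)!}e^{-\lambda_c\pi r^2}$, which confirms the normalization and exponents.
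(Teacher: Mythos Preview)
Your proof is correct and clean, but it takes a different route from the paper's own argument. The paper proceeds via conditioning: it first recalls the marginal density $f_{r_m}(x)=\frac{2(\lambda_c\pi)^m x^{2m-1}}{(m-1)!}e^{-\lambda_c\pi x^2}$, then computes the conditional CDF $F_{r_t|r_m}(y)$ by interpreting $\{r_t>y\mid r_m=x\}$ as the event that at most $t-m-1$ points fall in the annulus between radii $x$ and $y$, differentiates this CDF (using a telescoping trick to collapse the sum), and finally multiplies the conditional density by the marginal to obtain the joint pdf. Your argument instead goes straight to the joint event via the four-region infinitesimal decomposition and independent Poisson counts, bypassing both the conditional CDF and the differentiation step altogether. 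Your approach is more elementary and avoids the somewhat delicate simplification of the derivative that the paper has to work through; the paper's route has the side benefit of producing the conditional CDF and pdf as explicit intermediate objects, which could be reused elsewhere, but for the purpose of this lemma your direct order-statistics computation is shorter and equally rigorous.
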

\begin{proof}
See Appendix B. 
\end{proof}
 {\it Remark 4:} We note that Lemma \ref{lemma1} is general and can be applied to any two HPPP nodes which are ordered according to their distances to the origin.  

{\it Remark 5:} It is worth pointing out that  the joint pdf obtained  in \cite{6909064} is  a special case of Lemma \ref{lemma1}, when    $m=1$ and $t=2$.

Since the cache hit probability is a function of the outage probability, we provide   the outage performance for content pushing   in the following lemma. 
\begin{lemma}\label{lemma2}
Assume $\epsilon_{M_s}\geq \epsilon_{1}$. The outage probability of $\text{CS}_n$, $1\leq n\leq t$, for decoding  $f_1$ is given by
\begin{align}
\mathrm{P}_{n,1}  =e^{-\lambda_c\pi \left(\frac{\rho}{\epsilon_1}\right)^{\frac{2}{\alpha}}}\sum^{n-1}_{k=0} \frac{(\lambda_c\pi)^{k}\left(\frac{\rho}{\epsilon_1}\right)^{\frac{2k}{\alpha}}}{k! }.
\end{align}

The outage probability of $\text{CS}_t$ for decoding  $f_i$, $2\leq i\leq M_s$, is given by
\begin{align}
\mathrm{P}_{t,i} & =e^{-\lambda_c\pi \left( \frac{\epsilon_1}{\rho} +\frac{(1+\epsilon_1)}{\rho \phi_i}\right)^{-\frac{2}{\alpha}}}\sum^{t-1}_{k=0} \frac{(\lambda_c\pi)^{k}\left( \frac{\epsilon_1}{\rho} +\frac{(1+\epsilon_1)}{\rho \phi_i}\right)^{-\frac{2k}{\alpha}}}{k! },
\end{align}
where $\phi_i=\min\left\{\frac{\bar{\xi}_2}{\epsilon_2}, \cdots, \frac{\bar{\xi}_i}{\epsilon_i}\right\}$,  $\bar{\xi}_i=\left(\beta_i -\epsilon_i \sum^{{M_s}}_{j=i+1}\beta_j  \right)$ for $2\leq i<M_s$, and $\bar{\xi}_{M_s}= \beta_{M_s}  $. 

The outage probability of $\text{CS}_m$, $1\leq m<t$, for decoding $f_i$, $2\leq i\leq M_s$, is given by
\begin{align}\nonumber 
\mathrm{P}_{m,i}&\approx  \mathrm{P}_{t,1}+\frac{ 4(\lambda_c\pi)^{t}}{(t-m-1)!(m-1)!}    \sum^{t-m-1}_{p=0}(-1)^p{t-m-1 \choose p} \\\nonumber &\times  \sum^{N}_{l=1}\frac{\pi\left(\tau_2-\tau_1\right)}{2 N} f_m\left(\frac{\tau_2-\tau_1}{2}w_l+\frac{\tau_2+\tau_1}{2}\right)\sqrt{1-w_l^2},
\end{align}
where $\tau_1=\left(\frac{\rho \phi_i}{1+\epsilon_1+\epsilon_1\phi_i}\right)^{\frac{1}{\alpha}}$, $\tau_2=\left(\frac{\epsilon_1}{\rho }\right)^{-\frac{1}{\alpha}}$, $N$ denotes the parameter for Chebyshev-Gauss quadrature,  $w_l=\cos\left(\frac{2l-1}{2N}\pi\right)$, $g(y)=\left(\frac{(1+\epsilon_1)}{ \phi_i \left( \rho   -\epsilon_1 y^{\alpha}\right)}\right)^{-\frac{1}{\alpha}}$, and
\begin{align} 
f_m(y) =& \frac{e^{-\lambda_c \pi y^2} y^{2(t-m-1)-2p+1}    }{2m+2p}  \left(y^{2m+2p} - (g(y))^{2m+2p})\right). 
\end{align}  
\end{lemma}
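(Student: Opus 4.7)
The plan is to treat the three formulas separately: the first two are exact and follow from the HPPP void probability applied to the distance to the $n$-th nearest cluster head, while the third demands the joint pdf from Lemma \ref{lemma1} together with Chebyshev--Gauss quadrature to approximate an integral that lacks a closed-form primitive.

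For $\mathrm{P}_{n,1}$, I would first note that the CR power allocation in \eqref{c1} was chosen so that the outage event for $f_1$ at $\text{CS}_n$ is exactly the OMA event $\{r_n>(\rho/\epsilon_1)^{1/\alpha}\}$. Since $\{r_n>a\}$ is the event that the disk of radius $a$ about the origin contains fewer than $n$ points of $\Phi_c$, the HPPP finite-count probability immediately yields the truncated-Poisson sum in the stated formula. For $\mathrm{P}_{t,i}$, I would substitute $\alpha_j^2=\beta_j P_r$ for $j\ge 2$ into each SIC constraint $R^{CP}_{t,j}\ge R_j$, reducing it to $\rho P_r \bar{\xi}_j/r_t^\alpha \ge \epsilon_j$. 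Requiring this for all $j=2,\ldots,i$ simultaneously is equivalent to $\rho P_r\phi_i\ge r_t^\alpha$; substituting the closed-form expression for $P_r$ from \eqref{c2} and clearing denominators yields $r_t\le \tau_1$. Applying the HPPP distance distribution again gives the result; the hypothesis $\epsilon_{M_s}\ge\epsilon_1$ ensures $\tau_1\le\tau_2$, so the $f_i$-decodability event automatically subsumes the $f_1$-decodability event at $\text{CS}_t$.

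The third formula is where the real work lies. I would decompose the outage event for $f_i$ at $\text{CS}_m$ into two disjoint pieces. First, if $r_t>\tau_2$ then $P_r=0$ and $f_i$ is not transmitted at all, contributing exactly $\mathrm{P}_{t,1}$ and explaining the leading term. Second, if $r_t\le\tau_2$, then one must rule out SIC failure at $\text{CS}_m$; the CR allocation together with the ordering $r_m\le r_t$ forces the $f_1$ SIC condition at $\text{CS}_m$ to collapse to $r_t\ge r_m$, which holds automatically, so only the SIC chain for $f_2,\ldots,f_i$ matters. Mirroring the derivation for $\mathrm{P}_{t,i}$ but now with $r_m$ replacing $r_t$ in the channel term, this chain fails iff $r_m>g(r_t)$.

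The remaining contribution is thus a double integral of $f_{r_m,r_t}(x,y)$ over the region $\tau_1\le y\le \tau_2$, $g(y)<x\le y$, where the lower bound $\tau_1$ arises because $g(\tau_1)=\tau_1$ and the region collapses for smaller $y$. Expanding $(y^2-x^2)^{t-m-1}$ by the binomial theorem and computing the $x$-integral in closed form produces the factor $(y^{2m+2p}-g(y)^{2m+2p})/(2m+2p)$. The main obstacle is the outer $y$-integral: since $g(y)^{2m+2p}$ is a fractional power of $\rho-\epsilon_1 y^\alpha$, it admits no tractable antiderivative against the weight $y\,e^{-\lambda_c\pi y^2}$. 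I would resolve this by Chebyshev--Gauss quadrature, using the affine change of variables $y=((\tau_2-\tau_1)w+\tau_2+\tau_1)/2$ to map $[\tau_1,\tau_2]$ onto $[-1,1]$ and absorbing the Chebyshev weight $1/\sqrt{1-w^2}$ by multiplying the integrand by $\sqrt{1-w_l^2}$; this recovers the final quadrature sum with nodes $w_l=\cos((2l-1)\pi/(2N))$ as stated.
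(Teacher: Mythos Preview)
Your proposal is correct and follows essentially the same route as the paper: reducing the $f_1$ outage to the OMA event $\{r_n>(\rho/\epsilon_1)^{1/\alpha}\}$ via the CR allocation, collapsing the SIC chain for $f_2,\ldots,f_i$ at $\text{CS}_t$ to the single threshold $r_t\le\tau_1$, and for $\text{CS}_m$ splitting on $\{P_r=0\}$ versus $\{P_r>0\}$, then integrating the joint pdf of Lemma~\ref{lemma1} over $\{\tau_1\le r_t\le\tau_2,\;g(r_t)<r_m\le r_t\}$ with a binomial expansion of $(y^2-x^2)^{t-m-1}$ and Chebyshev--Gauss quadrature in $y$.

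One small inaccuracy: you attribute the role of the hypothesis $\epsilon_{M_s}\ge\epsilon_1$ to ensuring $\tau_1\le\tau_2$, but in fact $\tau_1\le\tau_2$ holds unconditionally (it reduces to $0\le1+\epsilon_1$). In the paper the hypothesis is invoked for a different purpose, namely to guarantee $(1+\epsilon_1)/(\epsilon_1\phi_i)\ge1$ so that an auxiliary constraint on the integration variable becomes vacuous; your cleaner identification of the region $g(y)<x\le y$, nonempty iff $y>\tau_1$, sidesteps that discussion entirely, so the hypothesis does not actually enter your argument. This does not affect the correctness of your derivation.
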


\begin{proof}
See Appendix C. 
\end{proof}
 {\it Remark 6:} By using the closed-form expressions  developed in Lemma \ref{lemma2}, the outage probabilities for content pushing can be  evaluated   in a straightforward manner, and  computationally challenging  Monte Carlo simulations can be avoided.    

{\it Remark 7:} We note that,  in Lemma \ref{lemma2} it is assumed that the target data rates and the power allocation coefficients are chosen to ensure $\bar{\xi}_i>0$. Otherwise, an outage will always happen  for decoding     file  $f_i$, $i\geq 2$, at the content servers. 

{\it Remark 8:} In  Lemma \ref{lemma2}, it is also assumed that $\epsilon_{M_s}\geq \epsilon_{1}$, in order to avoid a trivial case for the integral calculation, as shown in \eqref{termx1} in Appendix C. This assumption means that the target data rate for file $f_{M_s}$ is larger than that of file $f_1$, and the assumption is important to the performance gain of the NOMA assisted strategy over the OMA based one, as explained in the following.  Recall that   the CR power allocation policy   in \eqref{c1}   ensures that $\text{CS}_t$ can decode the pushed file $f_1$. If   there is any power left after satisfying the needs of $\text{CS}_t$, the BS will use the remaining power to push additional  files. If the target data rate of $f_1$, $R_1$, is very large, meeting  the decoding requirement of $\text{CS}_t$ becomes challenging, i.e., it is very likely that there is not much power   available for pushing additional files. In this case, the use of the proposed push-then-deliver strategy will not offer much performance gain compared to   OMA.  In other words,  a large $R_1$ is not   ideal   for applying  the proposed NOMA based   push-then-deliver strategy. The proposed strategy  is more beneficial in scenarios when the  target data rate of $f_1$ is small.

\subsection{Content Delivery Phase} 
In the previous subsection, the cache hit probability for content delivery has been analyzed. However, the event that  a user can find its requested file in the cache of its associated content server is not equivalent to the event  that this user can receive the file correctly, due to  the multi-path fading and path loss attenuation that affect its link to the content server.  Hence, in this subsection, the impact of NOMA on the reliability of content delivery is investigated. Similar to the previous subsection, the conventional OMA based content delivery strategy  is described first as a benchmark scheme. 
\subsubsection{OMA Based Content Delivery}
 Similar to the content pushing phase, the content delivery phase is also divided into multiple time slots, as shown in Fig. \ref{fig 12 a}.  During each time slot,  for the OMA case,  each content server  randomly schedules a single user whose requested file  is available   in its cache. We assume that each content server can find a user to serve for this OMA based content delivery, and multiple content servers    help their associated users simultaneously. 

\begin{figure}[!htp]\vspace{-1em}
\begin{center} \subfigure[ A general illustration for the  content delivery  phase ]{\label{fig 12 a}\includegraphics[width=0.49\textwidth]{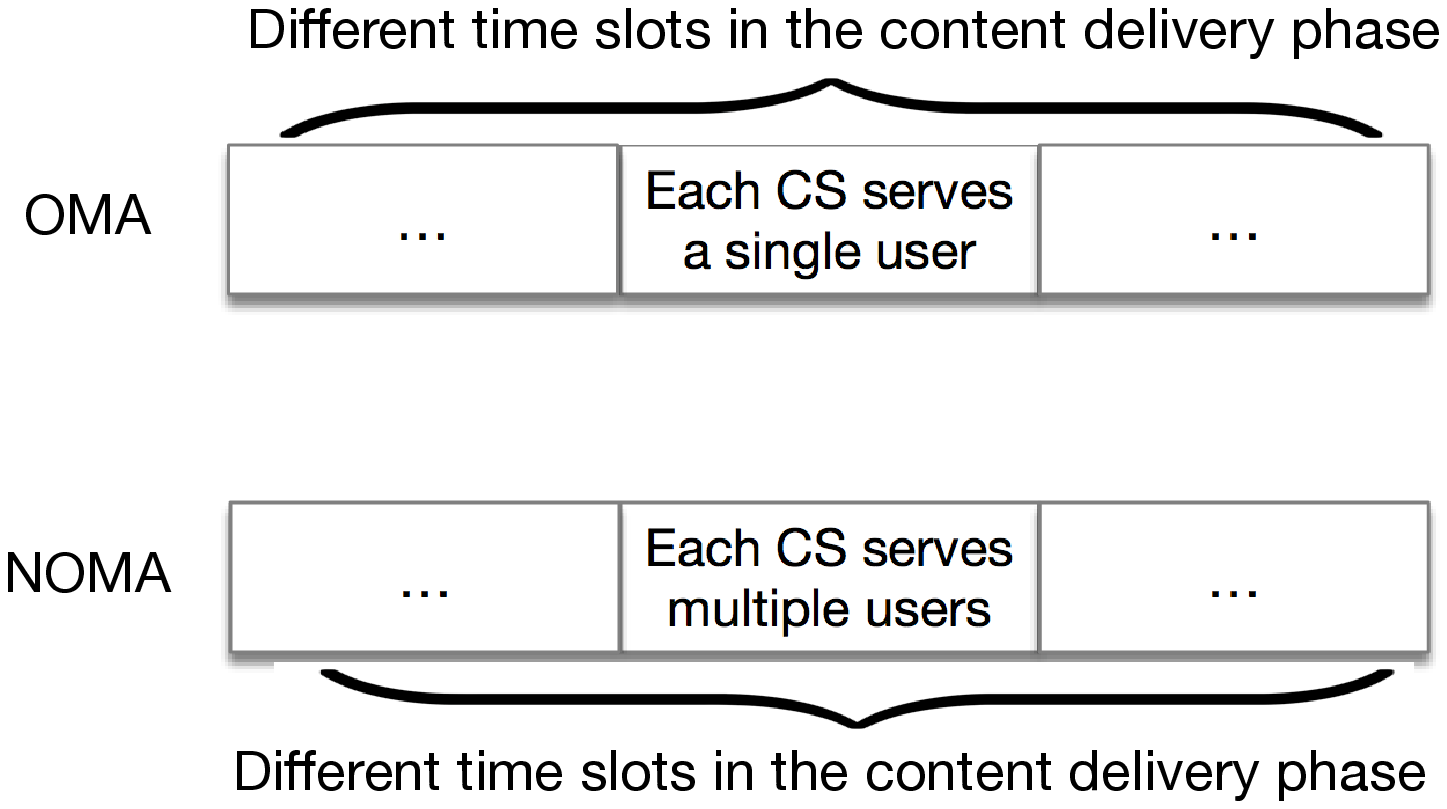}}
\subfigure[An example for the outcome of content pushing ]{\label{fig 1 b}\includegraphics[width=0.49\textwidth]{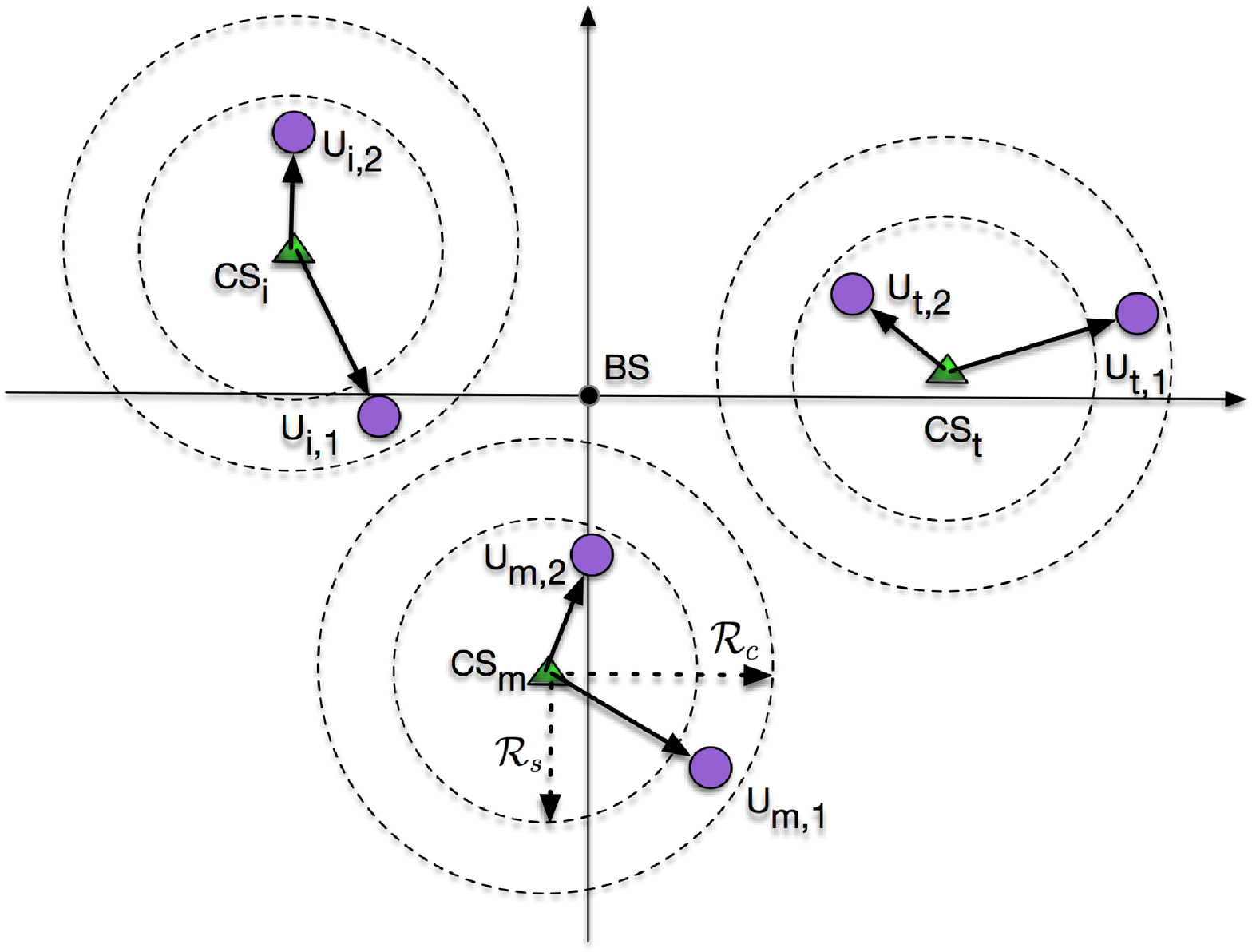}}\vspace{-1em}
\end{center}
\caption{  An illustration of the impact of NOMA on content delivery.   In OMA, each content server serves a single user. By using NOMA, an additional user can be served.     }\label{2fig02b2}\vspace{-0.5em}
\end{figure}

\subsubsection{NOMA Assisted Content Delivery}
If   the NOMA principle is applied in the content delivery phase, each content server can serve two users\footnote{ We focus on   the case with two users  since the content delivery phase is analog to the conventional downlink case and two-user NOMA based downlink transmission has been proposed for   long term evolution (LTE) Advanced \cite{3gpp1}. The analytical results presented  in this paper can be  extended to the case with more than two users by dividing the disc covered by a content server into multiple   rings. In practice,   the  number of users to be served simultaneously needs to reflect a practical tradeoff between system complexity and throughput. }.   Thereby, it is assumed that each content server can find at least two users  whose requests can be accommodated locally\footnote{ Please note that we did not make any assumption regarding  which particular user   makes a request to its content server. This is the reason why the locations and fading channels of the users are modelled as random.}. This assumption is applicable to high-density wireless networks,  such as networks deployed in sport stadiums or airports, where the number of users is much larger than the number  of content servers. We note that this assumption constitutes  the worst case for   the reception reliability of the users.  In fact,  content servers that  do not have any user to server   will not cause interference to  the users served by  other content servers.   Without loss of generality, assume that the two users are ordered based on their distances to the associated content server. As shown in Fig. \ref{2fig02b2},  the far user, which is far from the content server and is denoted by  $\text{U}_{m,1}$,  is inside  a ring bounded by  radii   $\mathcal{R}_s$ and $\mathcal{R}_c$, $\mathcal{R}_s<\mathcal{R}_c$. The near user, which is close  to the content server and is denoted by $\text{U}_{m,2}$, is inside  a disc with radius $\mathcal{R}_s$. Without loss of generality, denote the file requested by $\text{U}_{m,k}$ by $f_{m,k}$, $f_{m,k}\in\mathcal{F}$. Each content server broadcasts a superposition  signal  containing two messages, and $\text{U}_{m,k}$, which is associated with $\text{CS}_m$,  receives the following signal: 
\begin{align}\label{model1}
y_{m,k} =& \underset{\text{Signals from   $\text{CS}_m$}}{\underbrace{ \frac{h_{m,mk}}{\sqrt{L\left(||y_{m,k}||\right)}} \sum^{2}_{l=1} \alpha_l\bar{f}_{m,l}}}   \\ \nonumber &+   \underset{\text{Signals from interfering  clusters}}{\underbrace{\sum_{x_j\in \Phi_c\backslash m}\frac{ h_{j,mk}}{\sqrt{L\left(||y_{m,k}+x_m-x_j||\right)}}  \sum^{2}_{l=1} \alpha_l  \bar{f}_{j,l}}}  +n_{m,k},
\end{align}
where $\bar{f}_{j,l}$ denotes the signal which represents the information contained in file $ {f}_{j,l}$, $\alpha_l$ denotes the NOMA power allocation coefficient,  $n_{m,k}$ is the additive complex Gaussian noise,  and $h_{j,mk}$ denotes the Rayleigh fading channel coefficient between $\text{CS}_j$ and $\text{U}_{m,k}$.   In order to obtain tractable analytical results, fixed power allocation is used, instead of  CR power allocation, and   it is assumed that all   content servers  use the same fixed power allocation coefficients. In order to keep the notations consistent, the power allocation coefficients are still denoted by $\alpha_i$.   We note that the simulation results provided in Section \ref{section simulation} show that the use of this fixed power allocation can still ensure that NOMA outperforms OMA for both   users. 

    $\text{U}_{m,1}$ will treat its partner's message as noise and decode its own message $f_{m,1}$ with the following SINR:
\begin{align}\label{rate1}
\text{SINR}^1_{m,1} =&  \frac{   \frac{ \alpha_1^2|h_{m,m1}|^2}{L\left(||y_{m,1}||\right)}  }{   \frac{ \alpha_2^2|h_{m,m1}|^2}{L\left(||y_{m,1}||\right)} + \text{I}^{m,1}_{inter}   +\frac{1}{\rho}}  ,
\end{align}
where 
\[
\text{I}^{m,1}_{inter}=\underset{x_j\in \Phi_c\backslash m}{\sum}\frac{  | h_{j,m1}|^2}{L\left(||y_{m,1}+x_m-x_j||\right)}.
\]
In practice, the content servers are expected to  use less transmission power than the BS, but for notational simplicity, $\rho$ is still used to denote the ratio between the transmission power of the content servers and the noise power. In Section \ref{section simulation}, for the presented computer simulation results,   different transmission powers are adopted for  the BS and the content servers. 

The near user,  $\text{U}_{m,2}$, intends to first decode its partner's message with the data rate $\log(1+\text{SINR}^1_{m,2})$, where $\text{SINR}^1_{m,2} $ is defined similarly to $\text{SINR}^1_{m,1}$, i.e., $\text{SINR}^1_{m,2}=  \frac{   \frac{ \alpha_1^2|h_{m,m2}|^2}{L\left(||y_{m,2}||\right)}  }{   \frac{ \alpha_2^2|h_{m,m2}|^2}{L\left(||y_{m,2}||\right)} + \text{I}^{m,2}_{inter}   +\frac{1}{\rho}} $, and  the inter-cluster interference, $ \text{I}^{m,2}_{inter}$,  is defined similarly to $ \text{I}^{m,1}_{inter}$. If $\log(1+\text{SINR}^1_{m,2})>R_1$, i.e., $\text{U}_{m,2}$ can decode its partner's message  successfully,  $\text{U}_{m,2}$ will remove $f_{m,1}$ and decode its own message with the following SINR:
\begin{align}\label{rate2}
\text{SINR}^2_{m,2} =&  \frac{   \frac{ \alpha_2^2|h_{m,m2}|^2}{L\left(||y_{m,2}||\right)}  }{   \text{I}^{m,2}_{inter}  +\frac{1}{\rho}}   .
\end{align}

  The outage probabilities of the two users are defined as follows: 
  \begin{align}
  \mathrm{P}^1_{m,1} = \mathrm{P} (\log(1+\text{SINR}^1_{m,1})<R_1),
  \end{align}
   and 
   \begin{align}
   \mathrm{P}^2_{m,2} = 1 - \mathrm{P} (\log(1+\text{SINR}^1_{m,2})>R_1,\log(1+\text{SINR}^2_{m,2})>R_2).
   \end{align}  The following lemma provides    closed-form expressions for these outage probabilities. 
 \begin{lemma}\label{lemma3}
 The outage probability of $\text{U}_{m,2}$ can be expressed as follows:
 \begin{align}\label{eqxc3}
\mathrm{P}^o_{m,2}  \approx&     1- \sum^{N}_{n=1}\bar{w}_ne^{-\frac{c_{n,\mathcal{R}_s}\frac{1}{\rho}}{\tilde{\tau}}}    q\left(\frac{c_{n,\mathcal{R}_s}  }{\tilde{\tau}}\right)  ,
 \end{align} 
 where $\tilde{\tau}=\min\left\{ \frac{\alpha^2_1-\epsilon_1\alpha^2_2}{\epsilon_1} ,  \frac{ \alpha^2_2 }{\epsilon_2}\right\}$,  
 $q(s)   = 
{\rm exp}\left(-   2\pi\lambda_{c} \frac{s^{\frac{2}{\alpha}}}{\alpha}\text{B}\left(\frac{2}{\alpha}, \frac{\alpha-2}{\alpha}\right) \right)$, $\text{B}(\cdot, \cdot)$ denotes the Beta function, 
 $\bar{w}_n=  \frac{\pi}{2N}\sqrt{1-w_n^2}
\left(w_n+1\right)$, $w_n$ is defined in Lemma \ref{lemma2}, and $c_{n,r}=\left(\frac{r}{2}w_n+\frac{r}{2}\right)^\alpha$.

 The outage probability of $\text{U}_{m,1}$ can be expressed as follows:
\begin{align} 
\mathrm{P}^o_{m,1}   \approx& 1+\frac{\mathcal{R}_s^2}{\mathcal{R}_c^2-\mathcal{R}_s^2} \sum^{N}_{n=1}\bar{w}_ne^{-\frac{c_{n,\mathcal{R}_s}\frac{\epsilon_1}{\rho}}{\alpha^2_1-\epsilon_1\alpha^2_2}}  q\left(e^{-\frac{c_{n,\mathcal{R}_s}\epsilon_1  }{\alpha^2_1-\epsilon_1\alpha^2_2}}  \right) \\\nonumber &- \frac{\mathcal{R}_c^2}{\mathcal{R}_c^2-\mathcal{R}_s^2} \sum^{N}_{n=1}\bar{w}_ne^{-\frac{c_{n,\mathcal{R}_c}\frac{\epsilon_1}{\rho}}{\alpha^2_1-\epsilon_1\alpha^2_2}}   q\left( e^{-\frac{c_{n,\mathcal{R}_c}\epsilon_1 }{\alpha^2_1-\epsilon_1\alpha^2_2}}  \right).
 \end{align}
 
 \end{lemma}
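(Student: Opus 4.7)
The plan is to reduce each outage event to a condition of the form $|h_{m,mk}|^2 > \tau_k \cdot L(\|y_{m,k}\|)\,(\text{I}^{m,k}_{inter}+1/\rho)$ and then exploit Rayleigh fading and stochastic geometry.

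First I would analyse $\text{U}_{m,2}$. The success event requires both $\text{SINR}^1_{m,2} \geq \epsilon_1$ and $\text{SINR}^2_{m,2} \geq \epsilon_2$. Substituting \eqref{rate1}-type expressions and rearranging, the first inequality becomes $|h_{m,m2}|^2\,(\alpha_1^2-\epsilon_1\alpha_2^2)/\epsilon_1 \geq L(\|y_{m,2}\|)(\text{I}^{m,2}_{inter}+1/\rho)$, and the second becomes $|h_{m,m2}|^2\,\alpha_2^2/\epsilon_2 \geq L(\|y_{m,2}\|)(\text{I}^{m,2}_{inter}+1/\rho)$. Taking the binding constraint gives the threshold $1/\tilde{\tau}$ with $\tilde{\tau}=\min\{(\alpha_1^2-\epsilon_1\alpha_2^2)/\epsilon_1,\ \alpha_2^2/\epsilon_2\}$. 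For $\text{U}_{m,1}$ only the $\text{SINR}^1_{m,1}\geq \epsilon_1$ inequality needs to hold, giving the single threshold $\epsilon_1/(\alpha_1^2-\epsilon_1\alpha_2^2)$.

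Next, since $|h_{m,mk}|^2$ is unit-mean exponential, conditioning on $y_{m,k}$ and $\text{I}^{m,k}_{inter}$ yields a success probability of $\exp(-\kappa\, L(\|y_{m,k}\|)(\text{I}^{m,k}_{inter}+1/\rho))$, where $\kappa$ is the appropriate threshold. The noise contribution peels off as a deterministic factor $e^{-\kappa L(\|y_{m,k}\|)/\rho}$. For the interference, I would invoke the standard Laplace functional of a marked HPPP: with Rayleigh-faded interferers of density $\lambda_c$ and path loss $r^{-\alpha}$,
\begin{align*}
\mathbb{E}\big[e^{-s\, \text{I}^{m,k}_{inter}}\big] \;=\; \exp\!\left(-2\pi\lambda_c\!\int_0^\infty\! \frac{s/r^\alpha}{1+s/r^\alpha}\,r\,dr\right) \;=\; \exp\!\left(-\tfrac{2\pi\lambda_c}{\alpha}\,s^{2/\alpha}\,\mathrm{B}\!\left(\tfrac{2}{\alpha},\tfrac{\alpha-2}{\alpha}\right)\right),
\end{align*}
which is exactly $q(s)$. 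Evaluating this at $s=\kappa\,L(\|y_{m,k}\|)$ produces the integrand appearing in the lemma.

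Finally, I would average over the location of the user. The near user is uniform in a disc of radius $\mathcal{R}_s$, so the change of variable $r=\mathcal{R}_s(w+1)/2$ turns the radial integral $\int_0^{\mathcal{R}_s}(2r/\mathcal{R}_s^2)\,[\cdots]\,dr$ into an integral on $[-1,1]$; applying Chebyshev--Gauss quadrature with nodes $w_n$ and weights yielding $\bar{w}_n=(\pi/2N)\sqrt{1-w_n^2}(w_n+1)$ gives \eqref{eqxc3}. For the far user, the uniform density on the annulus $[\mathcal{R}_s,\mathcal{R}_c]$ is handled by writing $\int_{\mathcal{R}_s}^{\mathcal{R}_c}=\int_0^{\mathcal{R}_c}-\int_0^{\mathcal{R}_s}$ and weighting by $\mathcal{R}_c^2/(\mathcal{R}_c^2-\mathcal{R}_s^2)$ and $-\mathcal{R}_s^2/(\mathcal{R}_c^2-\mathcal{R}_s^2)$ respectively; the same quadrature then produces the two sums in the second expression.

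The main obstacle is the interference term, because the interferers' distances to $\text{U}_{m,k}$ are $\|y_{m,k}+x_m-x_j\|$ rather than $\|x_j\|$, so the set of interferers is not a genuine HPPP as seen from the user. This is why the statement is an approximation ``$\approx$'': I would invoke the standard PCP simplification of treating the user as coincident with the cluster centre for the purpose of inter-cluster interference (valid when $\mathcal{R}_c$ is small compared with the typical inter-cluster distance $1/\sqrt{\lambda_c}$), so that the interferers form an HPPP of density $\lambda_c$ and the Laplace functional above applies. The remaining steps are routine, with the quadrature error being the only other source of approximation.
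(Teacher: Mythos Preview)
Your proof strategy is essentially the paper's: reduce the success event to a single threshold via $\tilde{\tau}$, exploit the exponential law of $|h_{m,mk}|^2$, apply the Laplace functional of the interference, and then Chebyshev--Gauss quadrature over the user's radial position (with the annulus written as a difference of two discs). The order in which you average---fading first, then interference, then location---is slightly different from the paper (which bundles fading and location into the composite CDF $F_r(z)\approx\sum_n\bar w_n(1-e^{-c_{n,r}z})$ before taking the interference expectation), but the two are equivalent by Fubini.

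Your diagnosis of the ``main obstacle'', however, is incorrect. The interferers \emph{do} form a genuine HPPP as seen from the user, so no ``user at cluster centre'' simplification is needed, and this is \emph{not} the source of the ``$\approx$''. By Slivnyak's theorem, conditioned on $x_m\in\Phi_c$ the remaining points $\Phi_c\setminus\{x_m\}$ are still an HPPP of intensity $\lambda_c$; by translation invariance, conditioned further on $y_{m,k}$, the shifted process $\{x_j-(x_m+y_{m,k})\}_{j\neq m}$ is again an HPPP of intensity $\lambda_c$. The paper makes this explicit by the change of variable $y+x_m-x\to x'$ inside the PGFL integral, after which the inner integral no longer depends on $y_{m,k}$ and one obtains $\mathcal{L}_{\text{I}^{m,k}_{inter}}(s)=q(s)$ \emph{exactly}. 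Consequently the ``$\approx$'' in the lemma comes solely from the Chebyshev--Gauss quadrature over the user location; your interference step is exact, not an approximation that requires $\mathcal{R}_c\ll 1/\sqrt{\lambda_c}$.
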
 
 \begin{proof}
 See  Appendix D. 
 \end{proof}
   {\it Remark 9:} In the previous subsection, the CR power allocation policy is used and this type of power allocation   ensures that   the NOMA outage performance of the far user, $\text{U}_{m,1}$, is the same as  that for OMA. Since  fixed power allocation coefficients are used in this subsection  for content pushing, the performance of the far user is no longer guaranteed, but surprisingly,  our simulation results indicate that the use of NOMA can still yield    an    outage performance gain for the far user, compared to OMA, as shown in Section \ref{section simulation}.
   
\section{Push-and-deliver Strategy}\label{section cand d}
 While  the proposed push-then-deliver strategy   ensures that more files can be pushed to the content servers during a short content pushing phase, it still relies on the same principle as conventional caching in the sense that  content pushing and content delivery are separately  carried out.   Therefore, if  the time interval  between two adjacent content pushing phases  is large,  the caches of the content servers can be updated only infrequently. If new content arrives during the content delivery phase, the use of both conventional caching  and the proposed push-then-deliver strategy means that the base station has to wait   until  the next content pushing phase in order to update the caches of the content servers.   In contrast, the proposed push-and-deliver strategy provides  an efficient mechanism for frequently  updating   the files cached at the content servers  by exploiting opportunities    for content pushing during the content delivery phase as illustrated  in Fig. \ref{3fig01b1}.  In particular,  such opportunities arise,  when the BS   has to  serve a user directly during the content delivery phase,  as explained in the following.

 \begin{figure}[!htp]\vspace{-1em}
\begin{center} \subfigure[ General principle of push-and-deliver ]{\label{3fig01b1}\includegraphics[width=0.8\textwidth]{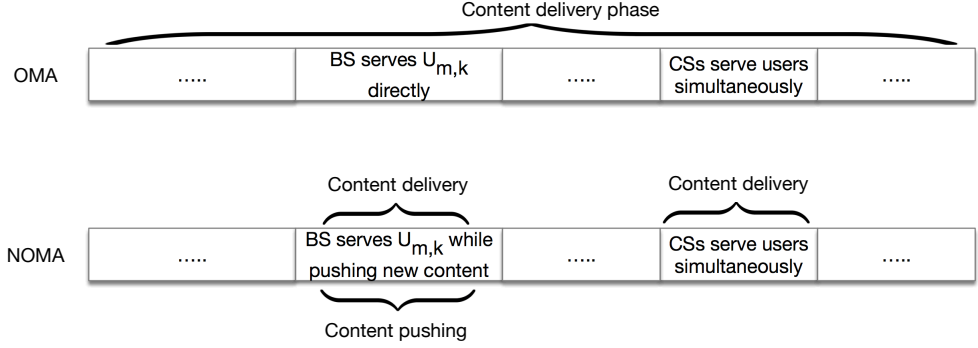}}
\subfigure[An   illustration of push-and-deliver]{\label{3fig02b2}\includegraphics[width=0.46\textwidth]{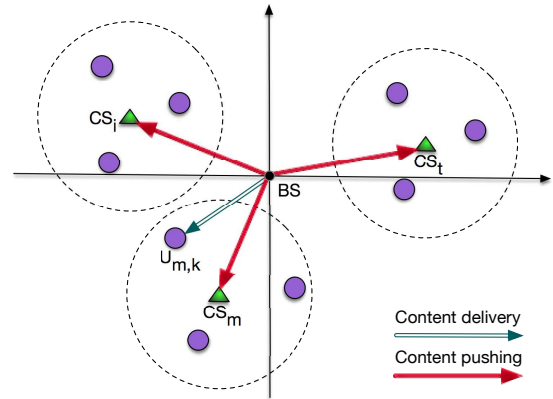}} 
\end{center}
\caption{  An illustration of the proposed push-and-deliver strategy.      }\label{fig04} 
\end{figure} 
In particular, consider a time slot which is dedicated to user $\text{U}_{m,k}$, as shown in Fig.  \ref{3fig02b2}. During this time slot, if  OMA  is used,  only this user can be served by the BS directly. However, the use of the NOMA principle offers  the opportunity to also push new content to the content servers\footnote{ We note that the BS cannot apply the proposed push-and-deliver  strategy in each time slot, as the need to serve a user directly is a prerequisite. Particularly, the proposed   strategy is applied in  time slots in which  a user cannot find the requested file at its local content server. More specifically, in these time slots, the BS pushes new content to the content servers, while severing the user directly. In other words, the push-and-deliver strategy can be applied whenever a user has to be served directly by the BS.   }, i.e.,  the BS sends a superposition signal  containing the file requested by $\text{U}_{m,k}$, denoted by ${f}_0$,  and the $M_s$ most popular files pushed by the BS, denoted by $f_i$, $1\leq i \leq M_s$.   Assume that  $ {f}_0$ and $f_i$, $1\leq i \leq M_s$, belong to different  file libraries, and the newly pushed files are useful to all the content servers, in order to avoid   correlation among these files and to simplify the expression for the cache hit probability.   This assumption is reasonable in practice and  can be justified by using the following scenario as an example.    Assume  that after a   period following the initial content pushing phase,   one user needs to be served by the BS directly, and during this       period,     a new library of files, e.g., a set of videos for breaking news, has arrived  at the BS to  be pushed to the content servers.  None of the content servers has had a  chance to cache these files yet. With the proposed push-and-deliver strategy, the BS  can push the new files to the content servers, while serving the requesting user directly, i.e.,   waiting  for the next content pushing phase is avoided.  

\subsection{Performance Analysis}\label{subs detl}
Following  similar steps  as in the previous section, the data rate of   $\text{U}_{m,k}$ for decoding its requested file, $f_0$, which is directly sent by the BS, is given by
\begin{align}\label{rate1}
\text{R}_{m,k} =& \log\left(1+ \frac{   \frac{ \alpha_0^2|h_{mk}|^2}{L\left(||y_{m,k}+x_m||\right)}  }{  \sum^{M_s}_{l=1} \frac{ \alpha_l^2|h_{mk}|^2}{L\left(||y_{m,k}+x_m||\right)}   +\frac{1}{\rho}}  \right),
\end{align}
and each content server, $\text{CS}_m$, can decode the additionally pushed file $f_i$ with the following data rate:
\begin{align}\label{rate1}
\text{R}^l_{m} =& \log\left(1+ \frac{   \frac{ \alpha_l^2}{L\left(|| x_m||\right)}  }{  \sum^{M_s}_{l=i+1} \frac{ \alpha_l^2 }{L\left(||x_m||\right)}   +\frac{1}{\rho}}  \right),
\end{align}
if $\text{R}^j_{m} $ is larger than $R_j$, for $0\leq j \leq i-1$, where  $R_l$ denotes the target data rate of $f_l$.  Again, small scale multi-path fading is not considered in the channel model for $\text{CS}_m$, as we assume that the large scale path loss is dominant in this case, but small scale fading is considered for the users' channels.  Note that the indices of the power allocation coefficients $\alpha_i$ start from $0$,  due to file $f_0$.  Compared to the distance between  $\text{CS}_m$ and the BS, the corresponding distance between $\text{U}_{m,k}$ and the BS has a very complicated pdf, as shown in the following subsection. Therefore, in order to obtain tractable analytical results,   fixed power allocation coefficients $\alpha_i$ will be used, instead of   the CR based  ones. The outage probabilities of the user and the content servers will be studied in the following subsections, respectively.

\subsubsection{Performance of the user}
The main challenge in analyzing the outage performance at the user is    the complicated expression for the pdf of the distance $||y_{m,k}+x_m||$. 
First, we define $ \bar{z}_{m,k}=   \frac{  |h_{mk}|^2}{L\left(||y_{m,k}+x_m||\right)} $.   The outage probability at the user can be expressed as follows:
\begin{align}
\mathrm{P}_{m,k}^1 = &\mathrm{P}(\text{R}_{m,k} <R_0)=\mathrm{P}\left(\bar{z}_{m,k}<\frac{{\epsilon}_0}{\rho \zeta_1}\right)\\\nonumber =& \mathcal{E}_{L\left(||y_{m,k}+x_m||\right)}\left\{1- e^{-L\left(||y_{m,k}+x_m||\right)  \frac{{\epsilon}_0}{\rho \zeta_0}}\right\} ,
\end{align} 
where  $\zeta_l=\alpha^2_l- {\epsilon}_l\sum^{M_s}_{j=l+1}\alpha^2_j$ for $0\leq l<M_s$, and $\zeta_{M_s}=\alpha^2_{M_s}$. Again, it is assumed that the power allocation coefficients and the target data rates are carefully chosen to ensure that $\zeta_l$ is positive. 

In order to derive  the pdf of $||y_{m,k}+x_m||$, we first define $r_m = ||x_m||$ and also a function $$g(r_m,r)= \frac{2r\arccos \frac{ r_m^2+r^2-\mathcal{R}_c^2 }{2r_m r}}{\pi \mathcal{R}_c^2}. $$  Conditioned on $r_m$,  the pdf of   $ ||y_{m,k}+x_m||  $ is given by \cite{7997055}
\begin{eqnarray}\label{rxm}
f_{||y_{m,k}+x_m|| } (r|r_m) = g(r_m,r)
,  \end{eqnarray}
for $ r_m- \mathcal{R}_c\leq r\leq r_m+ \mathcal{R}_c$, if $r_m>\mathcal{R}_c$. Otherwise, we have
\begin{eqnarray}\begin{array}{ll}
&f _{||y_{m,k}+x_m|| }(r|r_m) \\ = &\left\{\begin{array}{ll} 2\pi r, &\text{if } r\leq \mathcal{R}_c-r_m\\ 2\pi r-g(r_m,r), &\text{if }  \mathcal{R}_c-r_m <r\leq \sqrt{ \mathcal{R}_c^2-r_m^2}\\ g(r_m,r),&\text{if } \sqrt{ \mathcal{R}_c^2-r_m^2}<r\leq \mathcal{R}_c+r_m\end{array}\right..\end{array}
\end{eqnarray}

In order to avoid the trivial cases, which lead to $r=0$, i.e., the user is located at the same place as the BS, we assume that no   content server can be located  inside  the disc, denoted by $\mathcal{B}(x_0, \delta \mathcal{R}_c)$, i.e., a disc with the BS located  at its origin and   radius     $\delta \mathcal{R}_c$ with $\delta>1$, which means that  $r_m\geq \delta \mathcal{R}_c$  for all $m\geq 1$. Therefore, only the expression in \eqref{rxm} needs to be used since  $r_m$ is strictly larger than $\mathcal{R}_c$.

After using  the pdf of $ ||y_{m,k}+x_m||$, the outage probability can be expressed as follows:
\begin{align}\label{pdpmk} 
\mathrm{P}_{m,k}^1  =&    1- \int_{\delta\mathcal{R}_c}^{\infty}\int^{ z+ \mathcal{R}_c}_{z- \mathcal{R}_c}e^{-  \frac{{\epsilon}_0r^{\alpha}}{\rho\zeta_0}}  g(z,r)  d r \bar{f}_{r_m}(z)dz,
 \end{align} 
 where $\bar{f}_{r_m}(z)$ denotes the pdf of $r_m$.  
Because of the assumption that no content server can be located inside of $\mathcal{B}(x_0, \delta \mathcal{R}_c)$, the pdf of $r_m$ is different from that  in \eqref{ordered pdf}, but the steps of the proof for Theorem 1 in  \cite{1512427} can   still be applied to obtain the pdf of $r_m$.  Particularly, first denote   by $\mathcal{A}_r$ the ring with inner radius $\delta \mathcal{R}_c$   and outer radius $r$.  The cumulative distribution function (CDF)  of $r_m$ can be expressed as follows:
 \begin{align}
 \tilde{F}_{r_m}(r) =& 1 - \mathrm{P}(\#\text{ of nodes in the ring } \mathcal{A}_r<m)\\\nonumber 
  =&1 - \sum^{m-1}_{l=0}\frac{(\lambda_c[\pi r^2 - \pi\delta^2 \mathcal{R}_c^2])^l}{l!}e^{-\lambda_c[\pi r^2 - \pi\delta^2 \mathcal{R}_c^2]}.
 \end{align}
 Therefore, the pdf of $r_m$ can be calculated as follows: 
 \begin{align}\label{pdfx1}
 \bar{f}_{r_m}(r)  \nonumber
=& - 2\pi\lambda_c re^{-\lambda_c[\pi r^2 - \pi\delta^2 \mathcal{R}_c^2]}\left(\sum^{m-1}_{l=1}\frac{ (\lambda_c[\pi r^2 - \pi\delta^2 \mathcal{R}_c^2])^{l-1}}{(l-1)!}\right.\\\nonumber &\left. -\sum^{m-1}_{l=0}\frac{ (\lambda[\pi r^2 - \pi\delta^2 \mathcal{R}_c^2])^l}{l!} \right)
    \\
    =&  2\pi\lambda_c^m re^{-\lambda_c[\pi r^2 - \pi\delta^2 \mathcal{R}_c^2]} \frac{ [\pi r^2 - \pi\delta^2 \mathcal{R}_c^2]^{m-1}}{(m-1)!}  .  
 \end{align}
Substituting \eqref{pdfx1} into \eqref{pdpmk}, the outage probability of the user can be obtained.

\subsubsection{Performance of the content servers}
The content servers  need to carry out SIC in order to decode the newly pushed files $f_l$. As a result, the outage probability of $\text{CS}_m$ for decoding $f_i$  can be expressed as follows:
 \begin{align}
\mathrm{P}_{m}^i = &1 - \mathrm{P}(\text{R}_{m}^l >R_l,\forall l \in\{0, \cdots, i\}) \\\nonumber =&  
\mathrm{P}\left( L(||x_m||)>\min\left\{\frac{\rho \zeta_{l}}{{\epsilon}_l}, \forall l \in\{0, \cdots, i\}\right\} \right).
 \end{align}
 
 By applying the assumption that $r_m\geq \delta \mathcal{R}_c$ and also the pdf in \eqref{pdfx1}, the outage probability of  $\text{CS}_m$ for decoding $f_i$ can be expressed as follows:
  \begin{align}
\mathrm{P}_{m}^i = & \sum^{m-1}_{l=0}\frac{(\lambda_c\left[\frac{\pi}{\bar{\tau}_i^2 } - \pi\delta^2 \mathcal{R}_c^2\right])^l}{l!}e^{-\lambda_c\left[\frac{\pi}{\bar{\tau}_i^2}  - \pi\delta^2 \mathcal{R}_c^2\right]},
 \end{align}
 where $\bar{\tau}_i=\left(\frac{1}{\min\left\{\frac{\rho \zeta_{l}}{{\epsilon}_l}, \forall l \in\{0, \cdots, i\}\right\}} \right)^{\frac{1}{\alpha}}$. 
 
 Based on the outage probability $\mathrm{P}_{m}^i $,  the corresponding cache hit probability for a user associated with $\text{CS}_m$ can be   expressed as follows:
\begin{align}
\mathrm{P}^{hit}_{m} = \sum^{M_s}_{i=1}\mathrm{P}(f_i) (1-\mathrm{P}_{m}^i),
\end{align}
where $f_0$ has been omitted as it is the	 file currently requested by a user and  it is assumed that $f_0$ and $f_l$, $1\leq l\leq M_s$,  belong to  different libraries.

 \subsection{OMA Benchmarks}
 A naive OMA based benchmark is that the BS does not push new content while serving a user directly. Compared to this naive OMA scheme, the benefit of the proposed push-and-deliver strategy is obvious since new content is delivered and the cache hit probability will be improved. 
 
 A more sophisticated OMA scheme is to divide a single time slot into $(M_s+1)$ sub-slots. During the first sub-slot, the user is served directly by the BS. From the second until  the $(M_s+1)$-th sub-slots, the BS will individually push the files, $f_i$, $i\in\{1, \cdots, M_s\}$,  to the content servers. Compared to this more sophisticated OMA scheme, the use of the proposed push-and-deliver strategy still offers a significant gain in terms of the cache hit probability, as will be shown in Section \ref{section simulation}.  

\subsection{Extension to D2D Caching}

The aim of this subsection is to show that the concept of push-and-deliver can also be applied to D2D caching. Assume that a time slot is dedicated to a user whose request cannot be found in the caches of its neighbours, and during this time slot, the BS will send the requested  file $ {f}_0$ to the user directly. By applying the push-and-deliver strategy, the BS will also proactively push $M_s$ new files, $f_l$, $1\leq l\leq M_s$, to all  users for future use.  In other words, when the BS addresses the current demand of a user directly, the BS pushes   more content files  to all users, including the user which requests $f_0$, for future use.   

In the context of D2D caching, content servers are no longer needed. Therefore, the spatial model presented in Section \ref{Section system model} needs to be revised accordingly. Particularly, it is assumed  that  the locations of the users  are denoted by $y_k$ and modelled as an HPPP, denoted by $\Phi_u$, with density $\lambda_u$.  

After implementing   the push-and-deliver strategy, following  similar  steps as   in the previous subsection,   for a user with   distance $r$ from the BS and   Rayleigh fading channel gain $h$, the outage probability for decoding  $f_i$ is given by
\begin{align}\label{rate1}
\text{R}^i_r  =& \log\left(1+ \frac{   \alpha_i^2|h| ^2 r^{-\alpha}  }{  \sum^{M_s}_{j=i+1}  \alpha_j^2|h |^2r^{-\alpha}   +\frac{1}{\rho}}  \right),
\end{align}
when $\text{R}^l_r>R_l$, for $0\leq l\leq i-1$. If $f_{j}$, $0\leq j\leq M_s-1$, can be decoded correctly,   $f_{M_s}$ can be decoded by this user with the following data rate: 
\begin{align}\label{rate1}
\text{R}^{M_s}_r  =& \log\left(1+   \rho \alpha_{M_s}^2 |h |^2r^{-\alpha}    \right).
\end{align}
Consequently, for a user with   distance $r$   from the BS, the    probability of successfully decoding  $f_i$ can be expressed as follows:
\begin{align}\label{thinning}
\mathrm{P} ^i(r) =&   \mathrm{P}\left(\text{R} ^l_r>R_l, \forall l\in\{0, \cdots, i\}\right)\\\nonumber 
=&   e^{-\bar{\tau}_i^\alpha r^\alpha }. 
\end{align}
 
Following \eqref{thinning},  one can draw  the conclusion that        the locations of the users  which can successfully receive $f_i$ no longer follow the original HPPP with $\lambda_u$, but follow an inhomogeneous PPP which is thinned from the original HPPP by $\mathrm{P} ^i(r) $, i.e., the density of this new PPP is $\mathrm{P} ^i(r) \lambda_u$. By using this thinning process, the cache miss probability can be characterized as follows. 

During the  D2D content delivery phase, assume    a newly  arrived user, whose location is denoted by $y_0$, requests   file $f_i$. Denote  by $\mathcal{B}(y_0, d)$   a disc with radius $d$ and   its origin located at $y_0$.  This disc is the  area in which the   user   searches for a helpful neighbour which has the  requested file in its cache. For this inhomogeneous PPP, the cache hit probability for the user requesting $f_i$ is given by
\begin{align}\label{hit missing}
\mathrm{P}^{hit}_i =& 1 - \mathrm{P}(\text{no   user in } \mathcal{B}(y_0, d) \text{ caches } f_i)
\\\nonumber =& 1 -  e^{-\Lambda_i (\mathcal{B}(y_0, d))},
\end{align}
where   $\Lambda_i (\mathcal{B}(y_0, d))$ denotes the intensity measure of the inhomogeneous PPP for the users which have $f_i$ in their caches.   In \eqref{hit missing}, the hit probability is found by determining   the cache miss probability which corresponds to the event that  the user cannot find its requested file in the caches of its neighbours located in the disc $\mathcal{B}(y_0, d)$.  The calculation of the cache  hit probability depends  on the relationship between $d$ and the distance between  the observing user and the BS, denoted by $r_0$, as shown in the following subsections.

\begin{figure}[t]\vspace{-1em}
\begin{center} \subfigure[ $d<r_0$ ]{\label{fig 0 a}\includegraphics[width=0.44\textwidth]{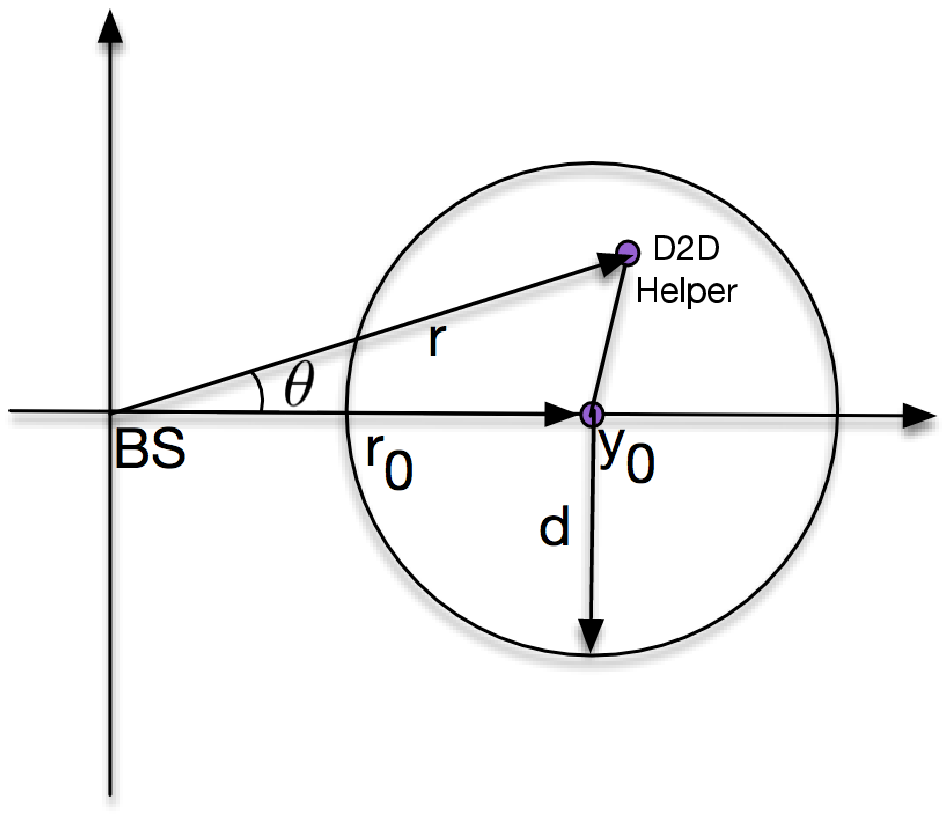}}
\subfigure[$d\geq r_0$ ]{\label{fig 0 b}\includegraphics[width=0.44\textwidth]{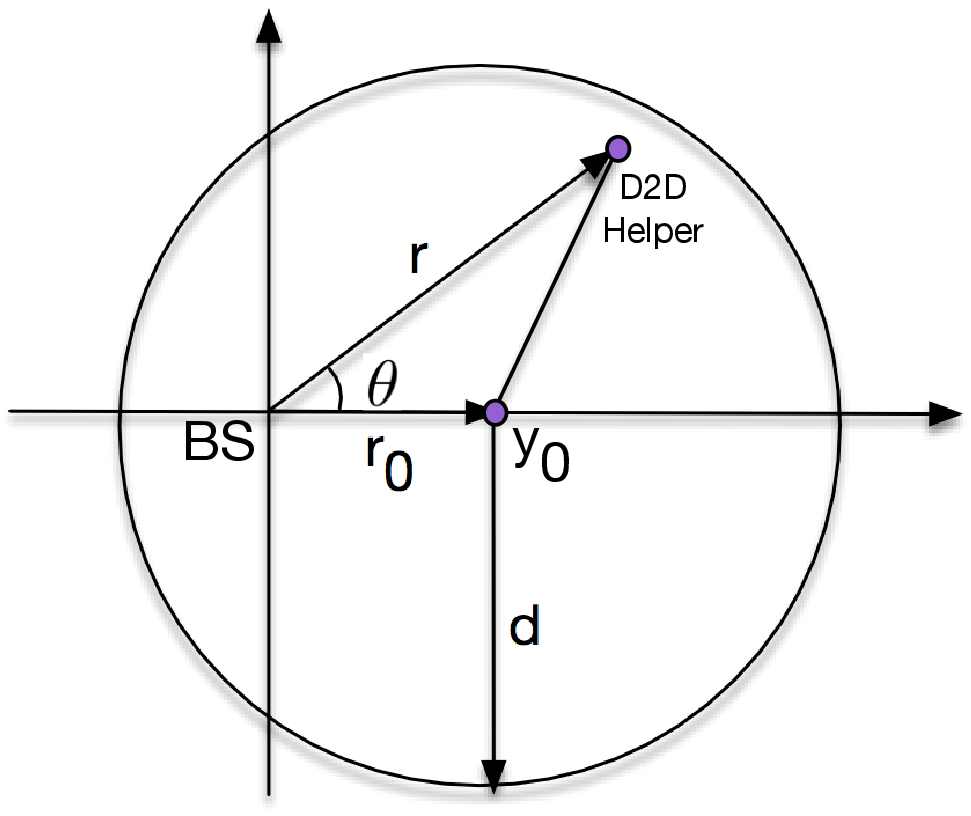}}\vspace{-1em}
\end{center}
\caption{ Two possible cases between  the radius of the search  disc, $\mathcal{B}(y_0, d)$,    and the distance between the observing user located at $y_0$ and the BS.  }\label{fig0}\vspace{-0.5em}
\end{figure}
\subsubsection{For the case of $d< r_0$} For $d<r_0$, define $\Lambda_i (\mathcal{B}(y_0, d))\triangleq \Lambda^i_{d\leq r_0}(r_0) $.
The assumption $d<r_0$ means that the BS is excluded  from  $\mathcal{B}(y_0, d)$.  Therefore, the intensity measure can be calculated as follows:
\begin{align}
 \Lambda^i_{d\leq r_0}(r_0)  
 &= \underset{{r, \theta \in \mathcal{B}(y_0, d)}}{\int\int }\mathrm{P}^i (r) \lambda_u d\theta rdr.
\end{align}
As can be observed from Fig. \ref{fig0}, the constraint on $r$ and $\theta$ can be  expressed as follows:
\begin{align}\label{theta app}
r^2+r_0^2 -2r_0 r\cos \theta\leq d^2.
\end{align}
Therefore, the intensity measure can be expressed  as follows:
\begin{align}\nonumber 
\Lambda^i_{d\leq r_0} (r_0)
 &=\int^{r_0+d}_{r_0-d} \int^{\arccos \frac{r^2+r_0^2-d^2}{2r_0r}}_{-\arccos \frac{r^2+r_0^2-d^2}{2r_0r}}\mathrm{P}^i (r) \lambda_u d\theta rdr\\ 
  &=2\lambda_u\int^{r_0+d}_{r_0-d}  \mathrm{P} ^i(r)    r\arccos \frac{r^2+r_0^2-d^2}{2r_0r} dr. 
\end{align}
By applying Chebyshev-Gauss quadrature, the intensity measure can be calculated as follows:
\begin{align}  \label{case 1}
\Lambda^i_{d\leq r_0} (r_0)
 &\approx 2\lambda_u  d \sum^{N}_{l=1}\frac{\pi}{N} g_r\left(r_0+dw_l\right) \sqrt{1-w_l^2},
\end{align}
where   $g_r(z)$ is given by
\begin{align}
g_r(z) = \mathrm{P}^i (z)    z\arccos \frac{z^2+r_0^2-d^2}{2r_0z} .
\end{align}
\subsubsection{For the case of $d\geq r_0$}
 For $d\geq r_0$,  define $\Lambda ^i(\mathcal{B}(y_0, d))\triangleq \Lambda_{d> r_0} $. The assumption,  $d\geq r_0$, means that the BS is   inside of $\mathcal{B}(y_0, d)$.   Following  similar  steps  as in  the previous case, the intensity measure can be evaluated  as follows:
 \begin{align}\nonumber 
\Lambda^i_{d\leq r_0} (r_0)
 =&\int_{0}^{d-r_0} \int^{\pi}_{-\pi}\mathrm{P}_i (r) \lambda_u d\theta rdr\\\nonumber &+\int^{r_0+d}_{d-r_0} \int^{\arccos \frac{r^2+r_0^2-d^2}{2r_0r}}_{-\arccos \frac{r^2+r_0^2-d^2}{2r_0r}}\mathrm{P}_i (r) \lambda_u d\theta rdr\\ \nonumber
  \approx&\frac{2\pi \lambda_u }{\alpha\bar{\tau}_i^2} \gamma\left( \frac{2}{\alpha}, \bar{\tau}_i^\alpha(d-r_0)^\alpha \right) \\&+2\lambda_u  r_0 \sum^{N}_{l=1}\frac{\pi}{N} g_r\left(d+r_0w_l\right) \sqrt{1-w_l^2},  \label{case 2}
\end{align}
where $\gamma(\cdot)$ denotes the incomplete Gamma function, and the approximation in the last step follows from the application of Chebyshev-Gauss quadrature.

Finally,  the cache hit probability can be obtained by substituting \eqref{case 1} and \eqref{case 2} into \eqref{hit missing}.

\section{Numerical Studies and Discussions } \label{section simulation}
In this section, the performances achieved by the proposed push-then-deliver and push-and-deliver strategies are studied by using computer simulations, where the accuracy of the developed analytical results will be also verified. The system parameters adopted for simulation and analysis are specified in the captions of the figures shown in this section. 

\subsection{Performance of Push-then-deliver Strategy}
In Figs. \ref{fig1} and \ref{fig2}, the impact of the NOMA assisted  push-then-deliver strategy on the cache hit probability is studied. The thermal noise is set as  $\sigma_n^2 = -100$ dBm.  The density of the content servers is parameterized by cluster radius  $\mathcal{R}_c$, i.e., $\lambda_c= \frac{0.01}{\pi \mathcal{R}_c^2}$, in order to account for  the fact that the density of the content servers is affected by $\mathcal{R}_c$.  By applying the NOMA principle to the content pushing phase, more content can be pushed to the content servers simultaneously, and hence, the cache hit probability is improved, compared to the OMA case, as can be observed from Fig. \ref{fig1}. For example, when the transmission power is $40$ dBm,    the shape parameter for  the content popularity is $\gamma=0.5$, and $\mathcal{R}_c=50$ m, the use of OMA yields  a hit probability of $0.2$, and the use of NOMA improves this value to  $0.45$, which corresponds to a $100\%$ improvement. At low SNR,  NOMA and OMA yield the same performance. This is due to the use of the CR  inspired power allocation policy in \eqref{c1}, which implies    that at low SNR, all the power is allocated to file $f_1$, and hence, there is no difference between the OMA and NOMA schemes.   Note that the curves for analysis and simulation match perfectly in  Fig. \ref{fig1},  which demonstrates the accuracy of the developed analytical results.   Furthermore, we note that   the  NOMA power allocation coefficients $\beta_i$ are assumed to be fixed. Optimizing these power allocation coefficients dynamically according to the users' channel conditions can further enhance  the performance gain of NOMA assisted caching compared to  the OMA baseline  scheme.

\begin{figure}[!htp]\vspace{-1em}
\begin{center} \subfigure[ $\mathcal{R}_c=100$m ]{\label{fig 1 a}\includegraphics[width=0.64\textwidth]{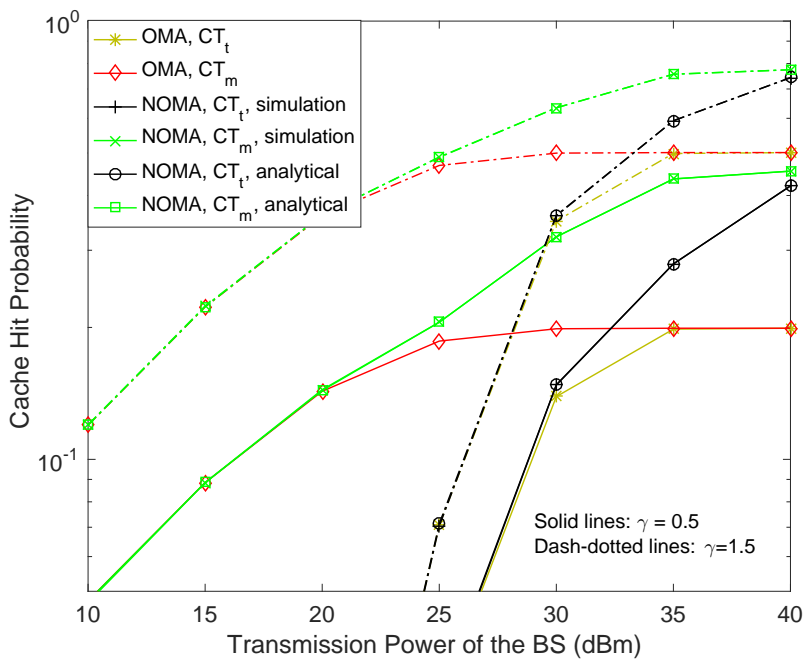}}
\subfigure[$\mathcal{R}_c=50$m ]{\label{fig 1 b}\includegraphics[width=0.64\textwidth]{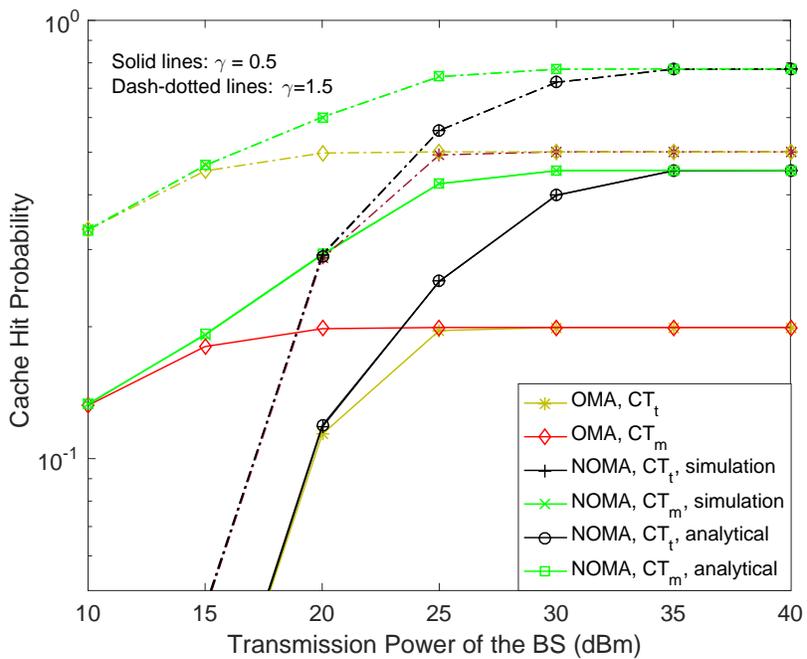}}\vspace{-1em}
\end{center}
\caption{  The cache hit probability  for     the push-then-deliver strategy.   $N=20$, $\alpha=3$,  $\lambda_c= \frac{0.01}{\pi \mathcal{R}_c^2}$,    $t=5$,  $m=1$,   $M_s=3$,  and $R_l=1$ bit per channel use (BPCU),  for $1\leq l \leq 3$. The power allocation coefficient for file $f_1$ is based on the CR power allocation policy.   The power allocation coefficients for files $f_2$ and $f_3$ are $\beta_2=\frac{3}{4}$ and $\beta_3=\frac{1}{4}$, respectively.      $|\mathcal{F}|=10$. }\label{fig1}\vspace{-0.5em}
\end{figure}

The impact of $\gamma$, the shape parameter defining  the content popularity, on the hit probability is  significant, as can be observed in Fig. \ref{fig1}. Particularly, increasing the value of $\gamma$   improves the hit probability. This is because a larger value of $\gamma$ implies that the first $M_s$ files become more popular,   hence ensuring the delivery of these more popular files   can significantly improve the hit probability, as indicated by  \eqref{hit pr}. Comparing Fig. \ref{fig 1 a} with Fig. \ref{fig 1 b}, one can observe that  the impact of $\mathcal{R}_c$ on the hit probability is also significant, which is due to the fact that the density of the content servers depends on   $\mathcal{R}_c$. Particularly,  a larger $\mathcal{R}_c$ means that the content servers  are more sparsely deployed and hence it is more difficult for the BS to push content to the  content  servers, and the cache hit probability decreases. 

 Recall that during the  time slot considered for content pushing  in Section \ref{subsection IIIA}, the BS pushes additional files to  $\text{CS}_m$ while ensuring that $f_1$ is pushed to $\text{CS}_t$.   In Fig. \ref{fig2}, the impact of different  choices of $m$ and $t$ on the cache hit probability is studied. As can be observed from the figure, increasing $t$ will decrease the hit probability. This is again due to the use of the CR power allocation policy.  In particular,  a larger $t$ means that more transmission power is needed to delivery $f_1$ to $\text{CS}_t$, and hence, less power is available for other files. An interesting observation in Fig. \ref{fig2} is that the shape of the hit probability curves is not smooth. This is due to the fact that the hit probability is the summation of   popularity probabilities   $P(f_l)$ and these  popularity probabilities are  prefixed and not continuous, as shown in \eqref{popular}.  On the other hand, for a fixed  $t$, increasing $m$ reduces the cache hit probability, since increasing $m$ means that $\text{CS}_m$ is further away from the BS and hence its reception reliability deteriorates.   

\begin{figure}[!tp]\centering\vspace{-1em}
    \epsfig{file=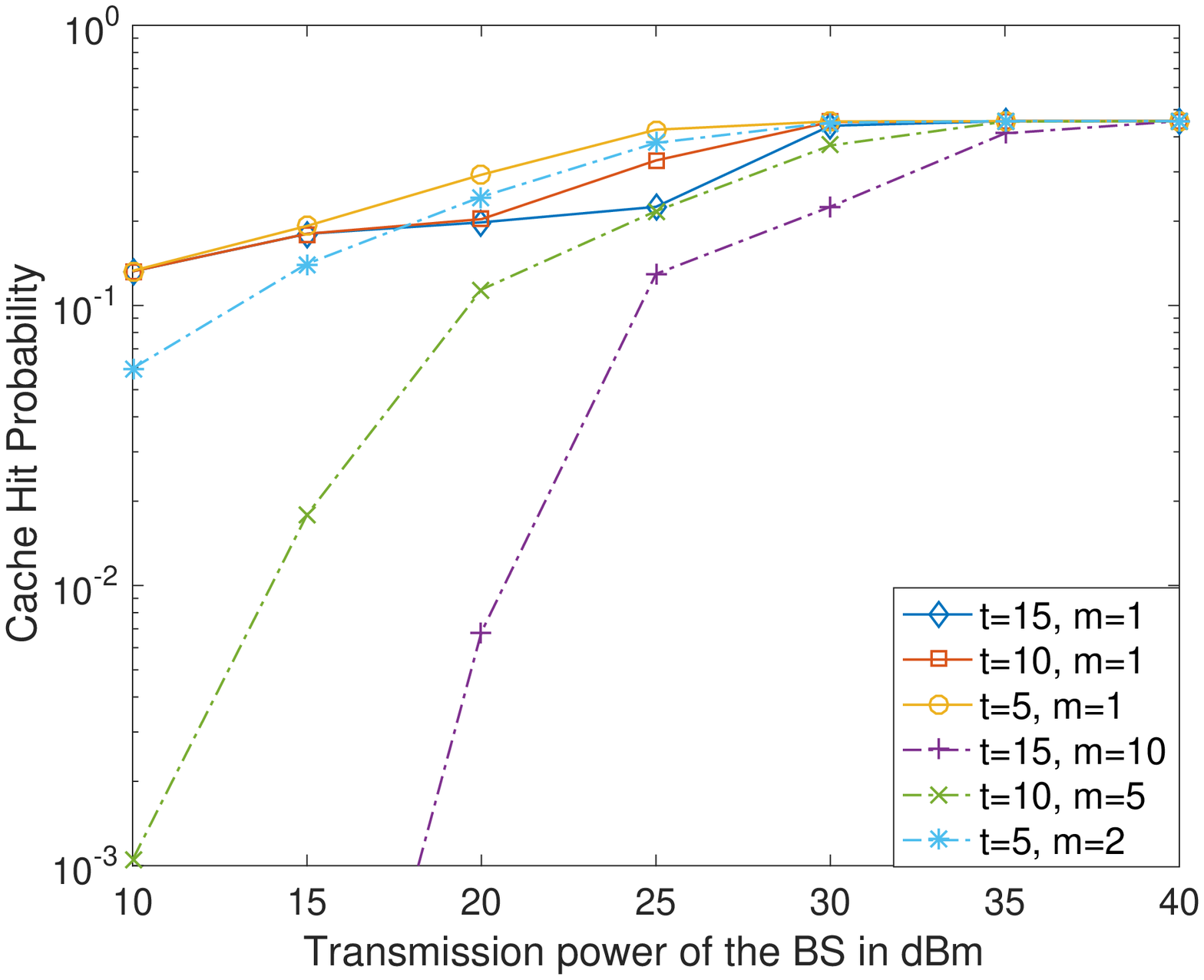, width=0.64\textwidth, clip=}\vspace{-1em}
\caption{ The impact of the choices of $m$ and $t$ on the cache hit probabilities for   the push-then-deliver strategy.  $N=20$, $\alpha=3$,  $\mathcal{R}_c=50$m,  $\lambda_c= \frac{0.01}{\pi \mathcal{R}_c^2}$,  ${M_s}=3$,  $t=5$,   $m=1$, and $R_l=1$ BPCU,  for $1\leq l \leq 3$. The power allocation coefficient for file $f_1$ is based on the CR power allocation policy.   The power allocation coefficients for files $f_2$ and $f_3$ are $\beta_2=\frac{3}{4}$ and $\beta_3=\frac{1}{4}$, respectively.      $\gamma=0.5$ and   $|\mathcal{F}|=3$. Analytical results are used to generate the figure. \vspace{-0.5em} }\label{fig2}
\end{figure}

\begin{figure}[!htp]\vspace{-1em}
\begin{center} \subfigure[ $\alpha=3$ ]{\label{fig 3 b1}\includegraphics[width=0.64\textwidth]{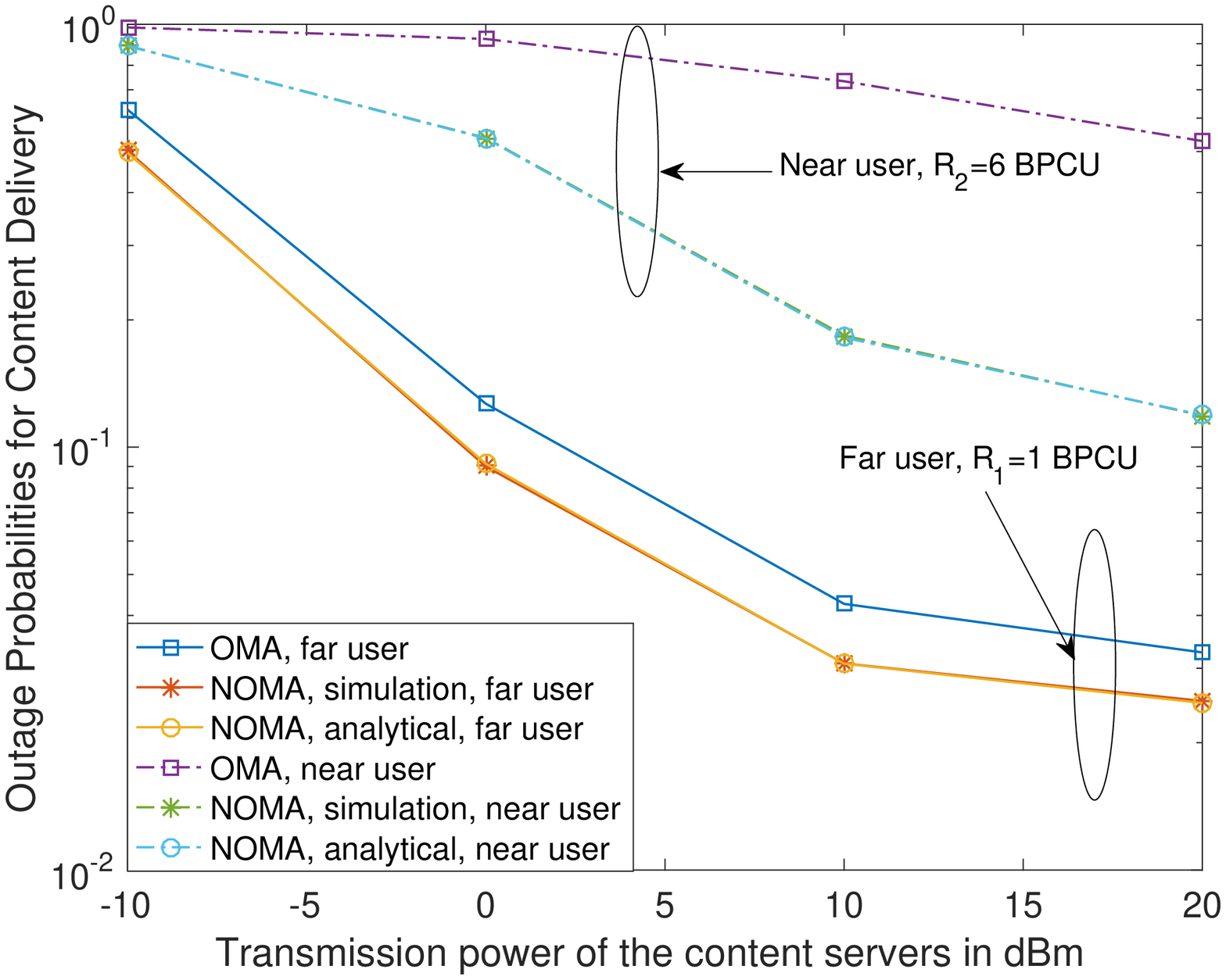}}
\subfigure[$\alpha=4$ ]{\label{fig 3 b2}\includegraphics[width=0.64\textwidth]{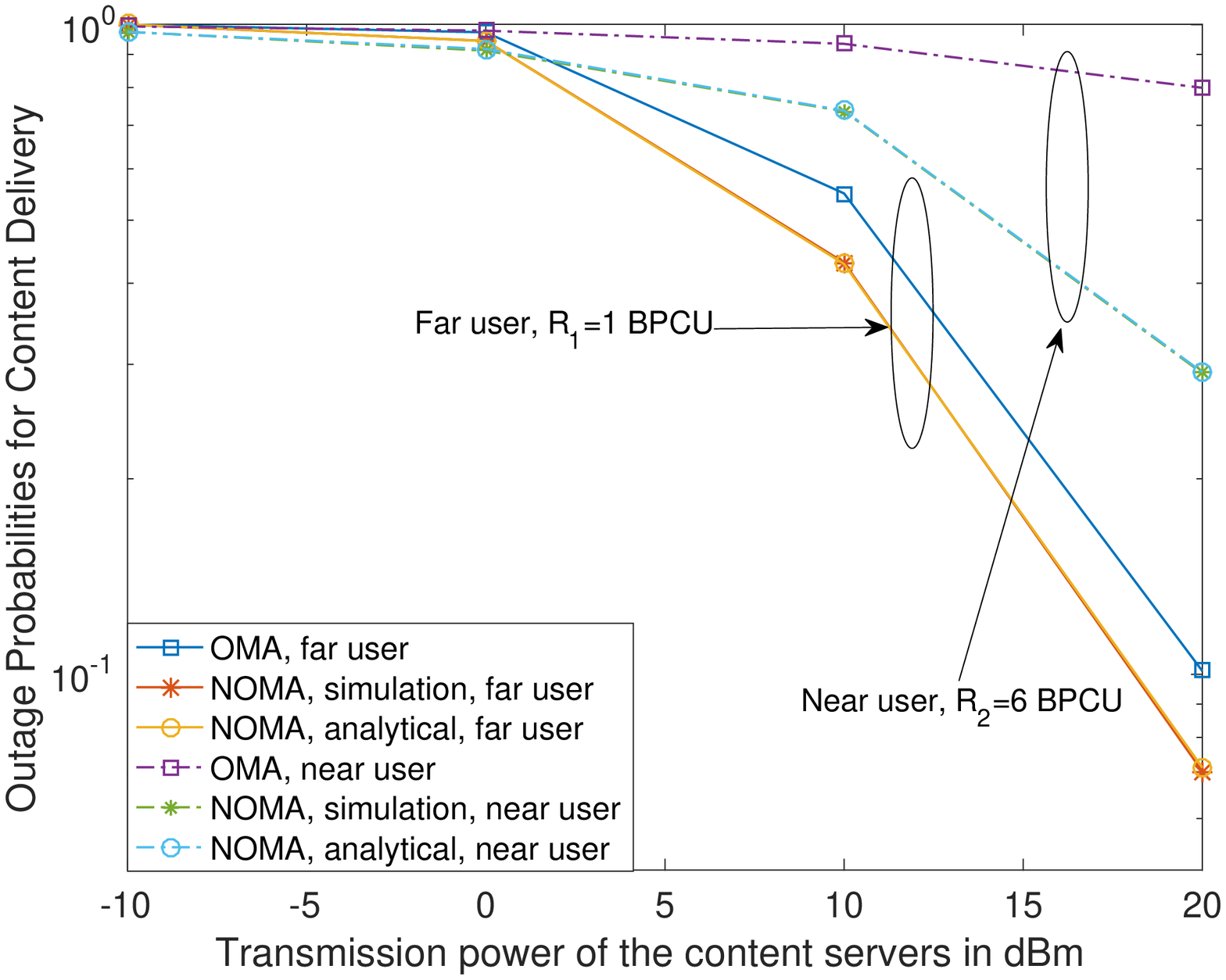}}\vspace{-1em}
\end{center}
\caption{  The outage probabilities for   content delivery for the push-then-deliver strategy.  $N=20$, $\alpha=4$, $\mathcal{R}_c=100$m.  $\lambda_c= \frac{0.01}{\pi \mathcal{R}_c^2}$,     $R_1=1$ BPCU, and $R_2=6$ BPCU. The power allocation coefficients are $\alpha_1^2=\frac{3}{4}$ and $\alpha_2^2=\frac{1}{4}$.   }\label{fig3}\vspace{-0.5em}
\end{figure}
In Fig. \ref{fig3}, the impact of using NOMA   for content delivery is studied,  where the rate pair $\{R_1, R_2\}$ is set to $\{1,6\}$ BPCU to account for  the fact that the near user can achieve a higher  data rate. As can be observed from the figure, the proposed push-then-deliver strategy can improve the  reliability of content delivery, particularly for the user with strong channel conditions. For example, when the path loss exponent is set to $\alpha=3$ and the transmission power of the content servers is $20$ dBm, the use of NOMA ensures that the outage probability for  the far user  is improved from $4.5\times 10^{-2}$ to $3\times 10^{-2}$, which is a relatively small performance gain. However,  the performance gap between the OMA and NOMA schemes for the near user is much larger, e.g., for the same case as considered  before, the outage probability is improved from $5\times 10^{-1}$ to $1.1\times 10^{-2}$. Note that the outage probability for content delivery has an error floor, i.e., increasing the transmission power of the content severs cannot reduce the outage probability to zero. This is because   multiple content servers transmit simultaneously, and hence,  content delivery becomes interference limited at high SNR. We note  that the impact of the path loss exponent on the reliability of content delivery is significant, as can be observed by comparing Figs. \ref{fig 3 b1} and \ref{fig 3 b2}. This is due to the fact that a smaller value of $\alpha$ results in a lower  path loss, which leads to an improved  reception reliability.

\begin{figure}[!htp]\vspace{-1em}
\begin{center} \subfigure[ Case 1]{\label{fig 4
b1}\includegraphics[width=0.64\textwidth]{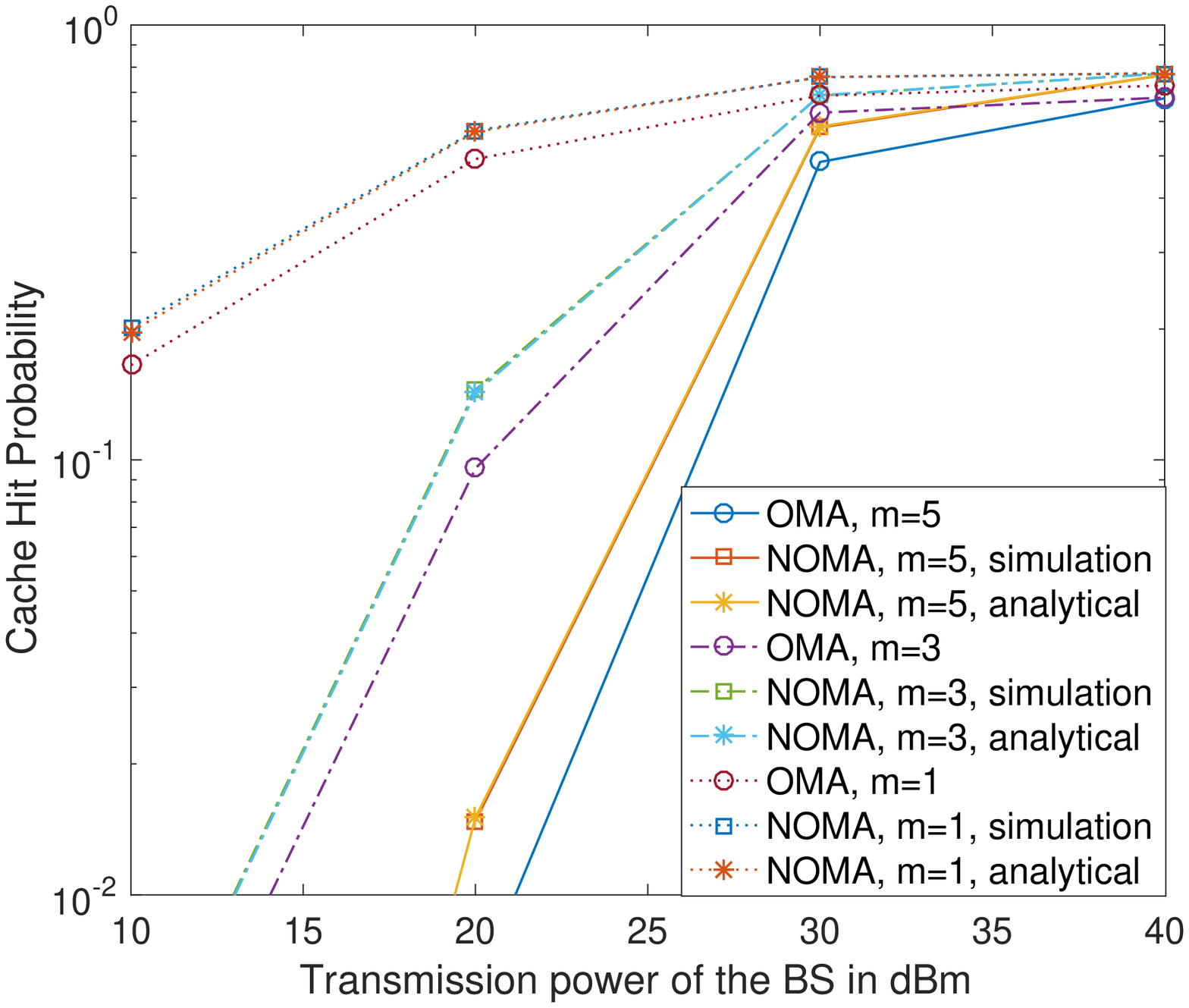}}
\subfigure[Case 2 ]{\label{fig 4
b2}\includegraphics[width=0.64\textwidth]{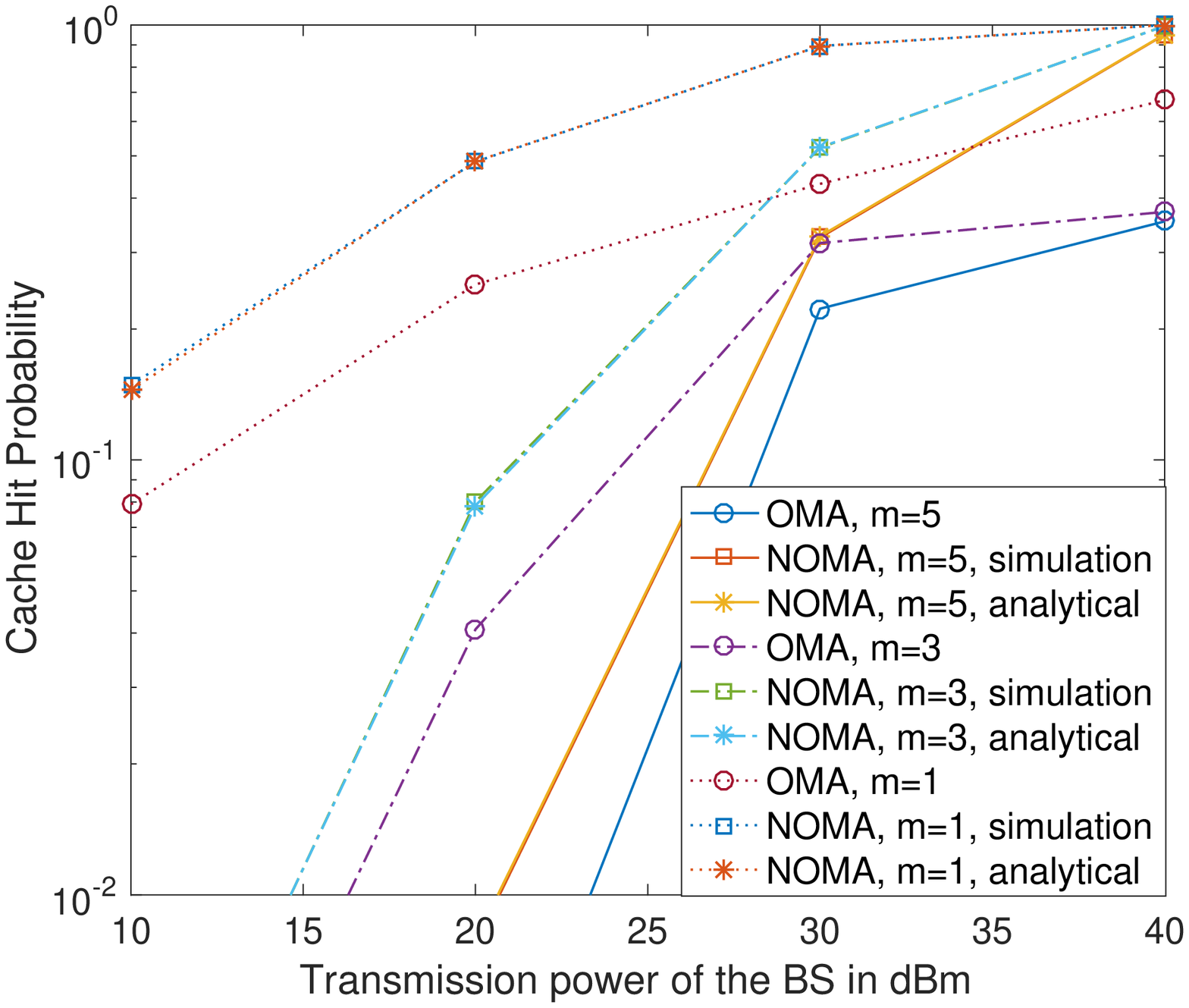}}\vspace{-0.5em}
\end{center}
\caption{  The cache hit probability   for the proposed     push-and-deliver strategy. $\mathcal{R}_c=50$m, $N=20$, $\gamma=1.5$, $\alpha=3$, $\lambda_c= \frac{0.01}{\pi \mathcal{R}_c^2}$, $M_s=3$, $\delta=1.1$.    $R_0=\frac{1}{8}$ BPCU, $R_1=\frac{3}{4}$ BPCU, $R_2=\frac{7}{8}$ BPCU,  and $R_3=\frac{11}{4}$ BPCU.   The power allocation coefficients are $\alpha_0^2=\frac{4}{8}$, $\alpha_1^2=\frac{3}{8}$, $\alpha_2^2=\frac{2}{8}$, and $\alpha_3^2=\frac{1}{8}$.  }\label{fig4} 
\end{figure}

\subsection{Performance of Push-and-deliver Strategy}
In Fig. \ref{fig4}, the impact of the proposed push-and-deliver strategy on the cache hit probability is studied.  We employ   $\delta=1.1$   to avoid the trivial case where the user is located at the same place as the BS, as discussed in  Section \ref{subs detl}.  As can be observed, the use of the proposed strategy can effectively  improve the cache hit probability compared to the OMA case, which is consistent with the conclusions drawn  in the previous subsection. In both sub-figures of Fig. \ref{fig4}, the analytical results  perfectly match    the simulation results, which verifies  the accuracy of the developed analysis. 

 In Fig. \ref{fig4},  the impact of different choices for the popularity parameters on the cache hit probability is   studied. In particular, the following two cases are considered:
 \begin{itemize} 
 \item Case 1: $ {\mathcal{F}}_1=\{f_1,\cdots ,f_{10}\}$, and the power allocation coefficient for $f_l$ is $\alpha_l$;
 \item Case 2: $ {\mathcal{F}}_2=\{f_1,\cdots,    f_3\}$, and the power allocation coefficient for $f_l$ is $\alpha_{4-l}$.
 \end{itemize}
 The two cases correspond to two different options for mapping  files with different popularities  to different power levels (or equivalently SIC decoding orders), where in the first case,  more popular files are assigned more power,  and in the second case,  less power is assigned to more popular files.

 In  Case 1,   the performance gap between NOMA and OMA is not significant, as can be observed from Fig. \ref{fig 4
b1}.  For example, when the transmit power is $40$ dBm and $m=5$, the use of OMA results in a hit probability of $0.7$ and the use of NOMA yields  a hit probability of $0.8$, where the gap is $0.1$ only. However, for a different set of   popularity parameters, i.e., Case 2, the performance gap between OMA and NOMA is significantly increased. For example, for a  transmit power of $40$ dBm and $m=5$, the performance gap between OMA and NOMA is enlarged to $0.5$. The reason behind  this phenomenon is  as follows. Recall that the use of NOMA can significantly improve the reception reliability of the files which are decoded at the later stages of  the SIC procedure, but the improvement for the files which are decoded during the first few stages  of SIC is not significant.  In Case 1,  the first few files will get larger weights in the sum of the cache hit probability, i.e., file $f_l$, for a small $l$, has more impact on the overall performance. As a result, the gap between OMA and NOMA in Case 1 is small, since the reception reliability for decoding these   files in the case of NOMA is not so different from that for OMA. On the other hand, Case 2 means that the most popular file, $f_1$, will be decoded   last. As discussed before,  the capabilities of OMA and NOMA to decode $f_1$ are quite different, which is the reason for the larger performance gap in   Case 2. 

Recall that the key idea of the push-and-deliver strategy is to perform content pushing when asking the BS to serve the users directly.  Fig. \ref{fig4} clearly demonstrates that this strategy can   efficiently push new content to the content servers, but it does not demonstrate  the impact of this strategy on content delivery, which is studied in Fig. \ref{fig5}. Particularly, as can be observed from  the figure, the use of the proposed push-and-deliver strategy does not degrade the reception reliability of content delivery. In fact, the use of NOMA can even  improve the outage probability for content delivery. 

In Fig. \ref{fig6}, the concept of the proposed push-and-deliver strategy is extended to D2D caching scenarios. Without loss of generality, the newly arrived user  is located at $y_0=(500 \rm{m}, 500 m)$. As expected, the use of the proposed strategy can significantly reduce the  miss probability, compared to the case of OMA. For example, for the case where the user density is $\lambda_u=5\times 10^{-5}$, a transmit power of $40$ dBm, and $d=150$ m, the use of NOMA yields a miss probability of $6\times 10^{-2}$, whereas  the miss probability for OMA is $1.6\times 10^{-1}$, which is much worse. As can be observed from the figure, increasing the value of $d$ can reduce the miss probability, since the area for searching for a D2D helper is increased. Another important observation is that by increasing the density of the users, the miss probability can be further reduced, since increasing the density means that more users are located in the same area and hence it is more likely to find a D2D helper. We note that, in Fig. \ref{fig6}, computer simulation and analytical results match perfectly, which demonstrates again the accuracy of the developed analysis. 

\begin{figure}[!htbp]\centering\vspace{-1em}
    \epsfig{file=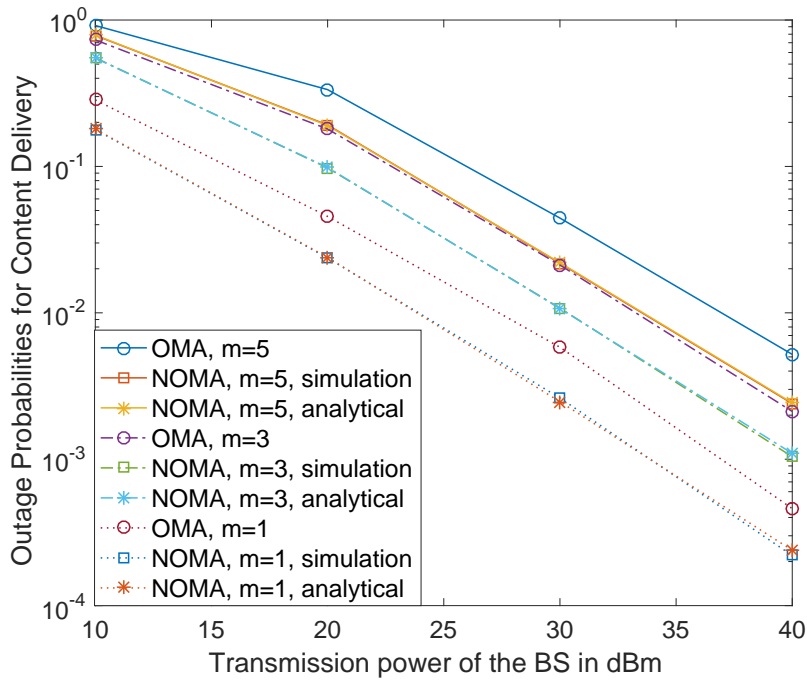, width=0.64\textwidth, clip=}\vspace{-1em}
\caption{ The impact of the push-and-deliver strategy on content delivery. $\mathcal{R}_c=50$m, $\gamma=1.5$, $\alpha=3$, $\lambda_c= \frac{0.01}{\pi \mathcal{R}_c^2}$, $M_s=3$, $N=20$, $\delta=1.1$.   $R_0=\frac{1}{8}$ BPCU, $R_1=\frac{3}{4}$ BPCU, $R_2=\frac{7}{8}$ BPCU,  and $R_3=\frac{11}{4}$ BPCU.   The power allocation coefficients are $\alpha_0^2=\frac{4}{8}$, $\alpha_1^2=\frac{3}{8}$, $\alpha_2^2=\frac{2}{8}$, and $\alpha_3^2=\frac{1}{8}$. Case 2 is considered.   \vspace{-1em} }\label{fig5}
\end{figure}

 \begin{figure}[!htp]\vspace{-1em}
\begin{center} \subfigure[ $\lambda_u=5\times10^{-5}$ ]{\label{fig 6 b1}\includegraphics[width=0.64\textwidth]{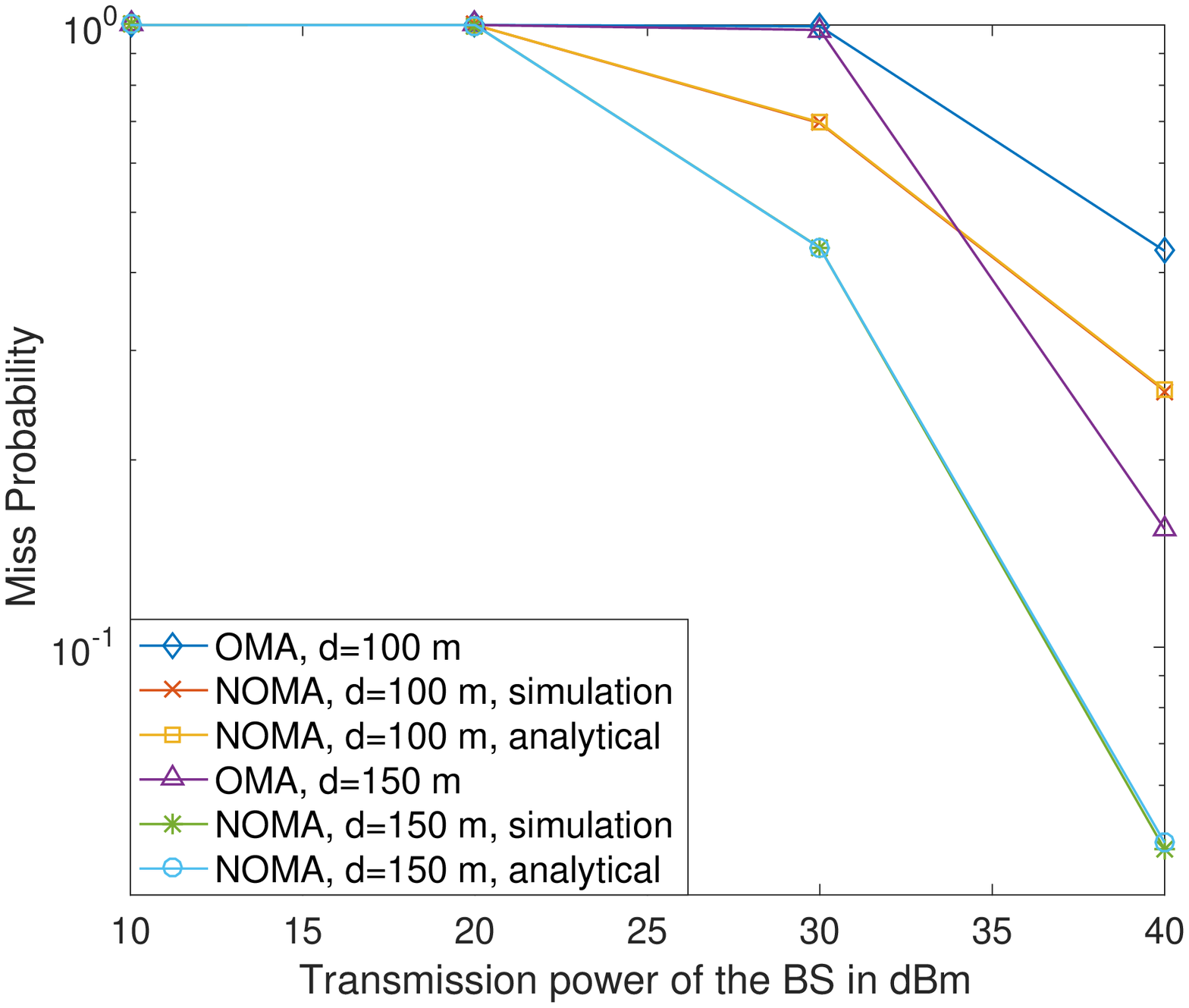}}
\subfigure[$\lambda_u=1\times10^{-4}$ ]{\label{fig  6 b2}\includegraphics[width=0.64\textwidth]{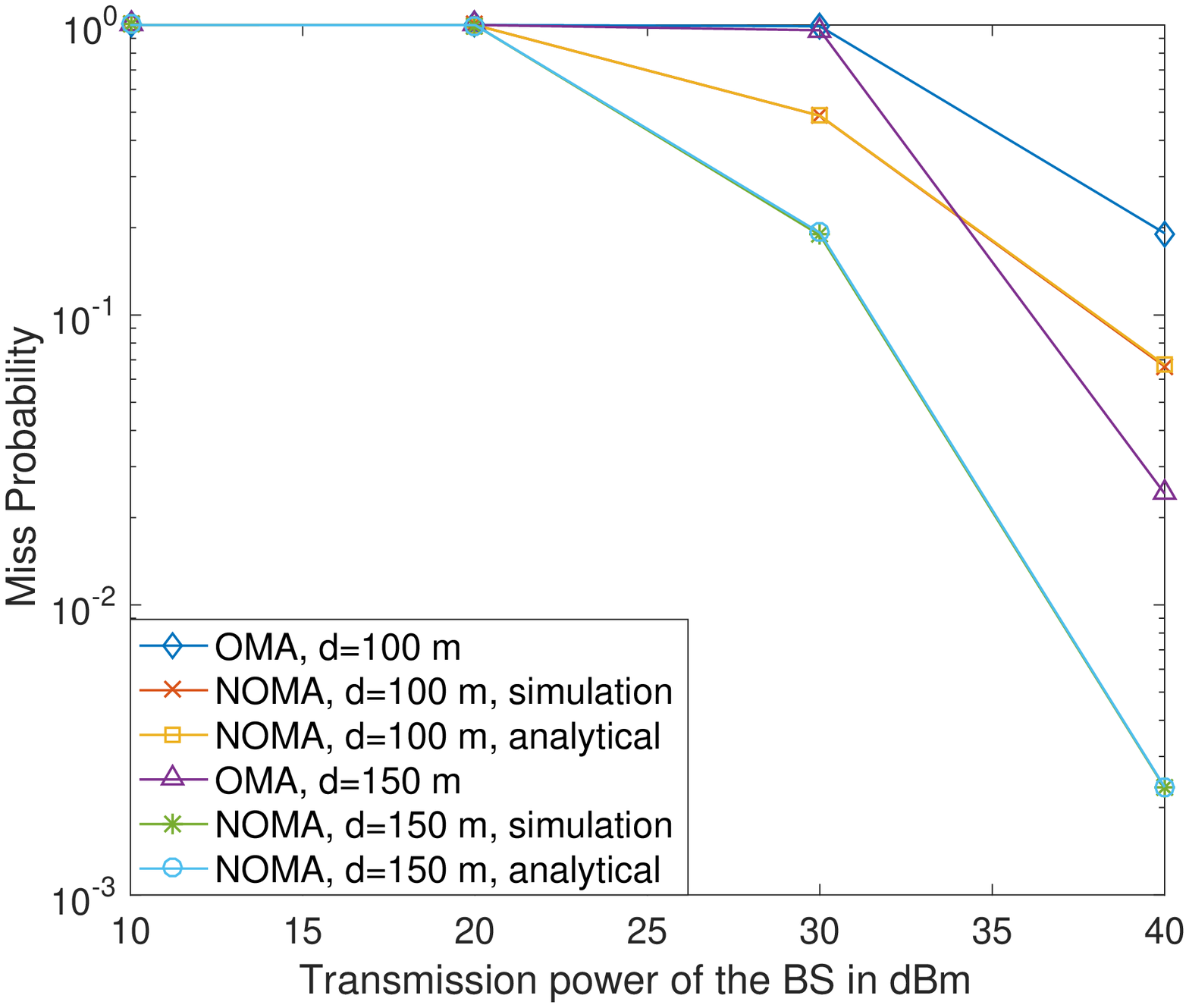}}\vspace{-1em}
\end{center}
\caption{  The impact of the proposed  push-and-deliver strategy on the cache miss probability in D2D caching scenarios.  $N=20$,  $\alpha=3$,   $y_0=$ {\rm (500 m, 500 m)},  $R_0=0.5$ BPCU, $R_1=4$ BPCU and $M_s=1$. The power allocation coefficients are $\alpha_0^2=\frac{3}{4}$ and $\alpha_1^2=\frac{1}{4}$.   }\label{fig6}\vspace{-1em}
\end{figure}

\section{Conclusions} \label{section conclusions}
 Unlike conventional wireless caching strategies which rely  on the use of off-peak hours for content pushing, in this paper,  the NOMA principle has been applied to wireless caching to enable the frequent update of the local caches  via wireless transmission during on-peak hours. Two NOMA assisted caching strategies have been developed, namely the push-then-deliver strategy and the push-and-deliver strategy.   The push-then-deliver strategy is applicable to the case when  the content pushing phase and the content delivery phase are separated, and utilizes  the NOMA principle independently in both  phases. The developed analytical results demonstrate that the proposed NOMA assisted caching scheme can efficiently improve the cache hit probability and reduce the delivery outage probability. The push-and-deliver strategy is motivated by the fact that, in practice,  it is inevitable that some user requests cannot be accommodated locally and   the BS has to serve these  users directly. The key idea of the push-and-deliver strategy is to merge the two phases, i.e., the BS  pushes content to the content servers while simultaneously serving users directly. Furthermore, in addition to the caching scenario with caching infrastructure, e.g., content servers, we have    considered   D2D caching, where  the use of NOMA has also been shown to yield   superior performance compared to   OMA. 

The results in this paper open  several new   directions for future research. First, in this paper, the file popularity parameter has been assumed to be given and fixed. As demonstrated in Fig. \ref{fig4}, different choices for this parameter yield different cache hit probabilities, which means that dynamically optimizing the   NOMA power allocation (or equivalently the NOMA SIC decoding order) for given  content  popularity parameters   could  further improve the performance of NOMA-assisted caching. Second, fixed  NOMA power allocation coefficients have been adopted  in this paper, except for Section  \ref{subsection x}, where    CR inspired power allocation   was used. In general, optimizing the power allocation coefficients is expected to  further improve the performance of NOMA based  caching. Third,     increasing the density of the users or the search area in D2D caching can increase the cache hit probability, but might also cause stronger  interference during the content delivery phase, when the D2D helpers deliver the requested files to their neighbours  simultaneously.  Therefore, for the content delivery phase, it is important to design   low-complexity algorithms for efficient scheduling of the users' requests in order to limit  co-channel interference.  In this context,   coordinated multi-point transmission (CoMP) and cloud radio access networks (C-RAN)  are interesting options for   suppressing   co-channel interference \cite{6665021,7488289}.     Fourth, in this paper, content pushing is carried out without exploiting  the structure of the content files. As shown in  \cite{6763007}, the spectral efficiency of wireless caching can be further improved by applying coded caching, since only parts of the content files need to be cached and one multicast transmission  during the content delivery phase can benefit  multiple users simultaneously. Thus, enhanced caching and delivery schemes combining the benefits of the proposed NOMA  based strategies and coded caching are of interest.  Fifth, in this paper, the cache hit probability has been used    as the performance criterion, whereas   latency is another important metric  \cite{8008769, 7936549, 8002603}.  The proposed caching strategies can potentially decrease  the latency   for content delivery. On the one hand,   the proposed push-then-deliver strategy can effectively reduce the waiting time of the users for being served, since multiple users can be simultaneously served by one content server. On the other hand, with the proposed push-and-deliver strategy, the files cached at the content servers can be updated more frequently, which indirectly  helps in reducing  the latency of content delivery. Hence, a formal analysis  of the impact of the proposed caching strategies on  the latency of content delivery is needed, where  various  effects have to be considered, including the number of retransmissions of the content, the scheduling delay for those users which are served by the BS directly, etc.

\appendices

\section{Proof of Theorem \ref{theorem1}}

Recall that the NOMA cache hit probability is $ \mathrm{P}^{hit}_{m} = \sum^{M_s}_{i=1}\mathrm{P}(f_i) (1-\mathrm{P}_{m,i})$ and the OMA hit probability is $\mathrm{P}^{hit}_{m,OMA} =  \mathrm{P}(f_1) (1-\mathrm{P}^{OMA}_{m,1})$.
Since the file popularity  probabilities are positive and  identical  for the NOMA and OMA cases, proving  $\mathrm{P}^{OMA}_{m,1}=\mathrm{P}_{m,1}$ for all   $\text{CS}_m$, $1\leq m \leq t$, is sufficient to prove the theorem.   

Recall that each content server will carry out SIC, i.e.,   files $j$, $1\leq j<i$, are decoded before   file $i$ is decoded. Therefore,   the outage probability of $\text{CS}_m$ for decoding  file $i$ can be expressed as follows: 
\begin{align}
\mathrm{P}_{m,i}=1-  \mathrm{P}\left( f_j \text{ is decoded}, \forall j\leq i\right).
\end{align}

For notational simplicity, first define $z_{m} \triangleq  \frac{1}{{L\left(||x_m-x_0||\right)}}$, and note that these channel gains are ordered as follows:  $ z_{1}\geq \cdots\geq z_{t}$. Therefore, the outage   probability can be expressed as follows:
\begin{align}
\mathrm{P}_{m,i} = 1- \mathrm{P}\left( z_m>\frac{\epsilon_l}{\rho\xi_l}, \forall l\leq i\right),
\end{align}
where  $\xi_l=\alpha_l^2-\epsilon_l \sum^{{M_s}}_{j=l+1}\alpha_j^2$.  

As discussed previously, showing $\mathrm{P}^{OMA}_{m,1}=\mathrm{P}_{m,1}$ is sufficient to prove the theorem. 
First, we focus on the performance of $\text{CS}_t$. 
According to the definition of the CR NOMA power allocation policy, the outage probability  of $\text{CS}_t$ for decoding the most popular file, $f_1$, is given by
\begin{align}\nonumber
\mathrm{P}_{t,1} &= \mathrm{P}\left( z_t<\frac{\epsilon_1}{\rho\xi_1} \right)
\\  &\overset{(a)}{=} \mathrm{P}\left(  \alpha_1=1 \right)= \mathrm{P}\left( z_t<\frac{\epsilon_1}{\rho } \right)=\mathrm{P}^{OMA}_{t,1},
\end{align}
where $\epsilon_1$ and $\xi_1$ are used since they are valid for file $f_1$.  Step $(a)$ follows from the fact that an outage occurs at $\text{CS}_t$ only if all the power is assigned  to file $f_1$, i.e., $ \alpha_1=1$. 
 Therefore, regarding the   capability of $\text{CS}_t$ to decode $f_1$, adopting   NOMA does not bring any difference, compared to   OMA.  

Second, the outage probability of $\text{CS}_m$, $1\leq m<t$, for decoding $f_1$,  is given by 
\begin{align}\nonumber
\mathrm{P}_{m,1} &= \mathrm{P}\left( z_m<\frac{\epsilon_1}{\rho\xi_1} \right)
\\\nonumber &=\mathrm{P}\left( \alpha_1=1,  z_m<\frac{\epsilon_1}{\rho\xi_1} \right)+\mathrm{P}\left( \alpha_1<1,  z_m<\frac{\epsilon_1}{\rho\xi_1} \right).
\end{align}
Since the channel conditions of $\text{CS}_m$ are better than those of $\text{CS}_t$, the condition that $\text{CS}_t$ can decode $f_1$ correctly, i.e., $\alpha_1<1$, is sufficient to guarantee   successful detection of $f_1$ at $\text{CS}_m$. Therefore, the outage probability can be simplified  as follows: 
\begin{align}\label{zm2}
\mathrm{P}_{m,1}  &=\mathrm{P}\left( \alpha_1=1,  z_m<\frac{\epsilon_1}{\rho\xi_1} \right) .
\end{align}
Note that the use of the CR power allocation policy in \eqref{c1}  complicates the expression for the outage probability, since the power allocation coefficients depend on   the channel conditions of $\text{CS}_t$. 
In order to better understand the outage events, we express  the    event $\{z_m<\frac{\epsilon_1}{\rho\xi_1} \}$   as follows:
\begin{align}\label{zm}
\left\{z_m<\frac{\epsilon_1}{\rho\xi_1}\right\} =& \left\{z_m<\frac{\epsilon_1}{\rho\left(1-P_r-\epsilon_1 P_r\right)}\right\} \\\nonumber =&  \left\{z_m<\frac{\epsilon_1}{\rho\left(1-(1+\epsilon_1 )\max \left\{0,  \frac{ \rho z_t  -\epsilon_1}{\rho (1+\epsilon_1)z_t }\right\}\right)}\right\} 
\\\nonumber =&  \left\{z_m<\frac{\epsilon_1}{\rho\left(1- \max \left\{0,  \frac{ \rho z_t  -\epsilon_1}{\rho z_t }\right\}\right)}\right\} .
\end{align}
By combining \eqref{zm2} and \eqref{zm},  surprisingly probability $\mathrm{P}_{m,1}$ can be  simplified  as follows:
\begin{align}
\mathrm{P}_{m,1}   =\mathrm{P}\left( z_t<\frac{\epsilon_1}{\rho },  z_m<\frac{\epsilon_1}{\rho } \right) , \label{pm1}
\end{align}
since $\max \left\{0,  \frac{ \rho z_t  -\epsilon_1}{\rho (1+\epsilon_1)z_t }\right\}=0$ for the case $ z_t<\frac{\epsilon_1}{\rho\xi_1}$.  On the other hand, it is straightforward to show that the outage probability for OMA is given by 
\begin{align}\nonumber 
\mathrm{P}^{OMA}_{m,1} &= \mathrm{P}\left( z_t>\frac{\epsilon_1}{\rho },  z_m<\frac{\epsilon_1}{\rho } \right)+  \mathrm{P}\left( z_t<\frac{\epsilon_1}{\rho },  z_m<\frac{\epsilon_1}{\rho } \right) \\\nonumber &=\mathrm{P}_{m,1}. 
\end{align}
Therefore, the NOMA outage performance of $\text{CS}_m$, $1\leq m\leq t$, for decoding $f_1$ is the same as that of OMA, but the use of NOMA can ensure that more content is delivered to the content servers, which proves the theorem. 

\section{Proof of Lemma \ref{lemma1}}

Since the content servers follow an HPPP, the pdf for the $m$-th shortest  distance is given by \cite{1512427} 
\begin{align}\label{ordered pdf}
f_{r_m} (x) = \frac{2\lambda_c^m\pi^m x^{2m-1}}{(m-1)!}e^{-\lambda_c \pi x^2} .
\end{align}
The conditional CDF for the $t$-th shortest distance, given $r_m=x$, can be expressed as follows: 
\begin{align}
F_{r_t|r_m}(y) &\triangleq   \mathrm{P}(r_t\leq y|r_m=x)\\\nonumber &=  1 - \mathrm{P}(r_t>y|r_m=x). 
\end{align}
The event, $(r_t>y|r_m=x)$, corresponds to the case  where    the $t$-th nearest content server  is not located inside  the ring between a larger circle with radius $y$ and a smaller one with radius $x$. Or equivalently, the event, $(r_t>y|r_m=x)$,  means that  at most $(t-m-1)$ content servers are inside the ring between the two circles. Therefore, the conditional CDF, $F_{r_t|r_m}(y) $, can be explicitly written  as follows:
\begin{align}
F_{r_t|r_m}(y)  &=1 - \sum^{t-m-1}_{n=0} \mathrm{P}( \# ({\bf int}(\mathcal{B}(x_0, x),\mathcal{B}(x_0, y))=n),
\end{align}
where $\#(\mathcal{A})$ denotes the number of points falling into the area $\mathcal{A}$, $\mathcal{B}(x_0, x)$ denotes a disc with its origin located at $x_0$ and radius $x$, and ${\bf int}(\mathcal{B}(x_0, x),\mathcal{B}(x_0, y))$ denotes the ring between the boundaries of $\mathcal{B}(x_0, x)$ and $\mathcal{B}(x_0, y)$.

By applying the HPPP assumption, the conditional CDF can be found as follows:  
\begin{align}\label{cdfxx}
F_{r_t|r_m}(y)=1 - \sum^{t-m-1}_{n=0}  (\lambda_c \pi)^{n} (y^2-x^2)^{n}\frac{e^{-\lambda_c \pi (y^2-x^2)}}{n!}.
\end{align}
 
In order to find the joint pdf between $r_m$ and $r_t$, the conditional pdf is needed first. However, the  derivative of the CDF $F_{r_t|r_m}(y)$ shown in the above equation has the following complicated form:
\begin{align}\label{diff1}
f_{r_t|r_m}(y)=& \sum^{t-m-1}_{n=1} \frac{ 2y(\lambda_c \pi)^{n}(y^2-x^2)^{n-1}}{n!}e^{-\lambda_c \pi (y^2-x^2)}\\\nonumber &\times 
\left[\lambda_c \pi(y^2-x^2)-n\right] + 2\lambda_c \pi ye^{-\lambda_c \pi (y^2-x^2)}. 
\end{align}
This complicated  form makes the calculation of the outage probability very difficult. Instead, the steps provided in \cite{1512427} can be used   to obtain a much simpler  form, as shown in the following.  First, define $S_n = \frac{(\lambda_c\pi(y^2-x^2))^n}{n!}$, and hence the conditional  CDF obtained in \eqref{cdfxx} can be re-written as follows:
\begin{align}
F_{r_t|r_m}(y)=1 - \sum^{t-m-1}_{n=0} S_n e^{-\lambda_c \pi (y^2-x^2)} .
\end{align}
After taking  the derivative  of the CDF and exploiting  the structure of $S_n$, the conditional pdf can be obtained  as follows: 
\begin{align}\nonumber
f_{r_t|r_m}(y)=& 2y\lambda_c\pi e^{-\lambda_c \pi (y^2-x^2)}\left(\sum^{t-m-1}_{n=0}S_n - \sum^{t-m-1}_{n=1}S_{n-1}\right)
\\  =& 2y(\lambda_c\pi)^{t-m} e^{-\lambda_c \pi (y^2-x^2)}  \frac{ (y^2-x^2)^{t-m-1}}{(t-m-1)!}   ,
\end{align}
which is much simpler than the expression  in \eqref{diff1}. 

By applying Bayes' rule, the joint pdf between $r_m$ and $r_t$ can be obtained as follows:
\begin{align}
f_{r_m,r_t}(x,y) =&f_{r_m|r_t}(x)f_{r_t} (y)\\\nonumber=&  4y(\lambda_c\pi)^{t} e^{-\lambda_c \pi y^2}  \frac{  x^{2m-1}(y^2-x^2)^{t-m-1}}{(t-m-1)!(m-1)!}    . 
\end{align}
Note that, for the special case of $m=t-1$,  the two parameters, $x$ and $y$, are decoupled to yield  the following simplified form for the joint pdf:
 \begin{align}
f_{r_m,r_t}(x,y) &=\frac{4(\lambda_c\pi)^{m+1}yx^{2m-1}}{(m-1)!}e^{-\lambda_c\pi y^2}.
\end{align}
This  completes   the proof of  the lemma.

\section{Proof of Lemma \ref{lemma2}}

Following the steps provided in the proof of Theorem \ref{theorem1},  the outage probability for $\text{CS}_t$ to decode $f_1$ is given by  
\begin{align}\nonumber
\mathrm{P}_{t,1}  = \mathrm{P}\left( z_t<\frac{\epsilon_1}{\rho } \right). \label{alpha1}
\end{align}
After applying the marginal pdf of the $t$-th shortest distance shown in \eqref{ordered pdf}, $\mathrm{P}_{t,1}$ can be calculated as follows:
\begin{align}
\mathrm{P}_{t,1} &=  \frac{2\lambda_c^t\pi^t }{(t-1)!}\int^{\infty}_{\frac{\rho^{\frac{1}{\alpha}}}{\epsilon_1^{\frac{1}{\alpha}}}}y^{2t-1}e^{-\lambda_c \pi y^2} dy\\\nonumber &=e^{-\lambda_c\pi \left(\frac{\rho}{\epsilon_1}\right)^{\frac{2}{\alpha}}}\sum^{t-1}_{k=0} \frac{(\lambda_c\pi)^{k}\left(\frac{\rho}{\epsilon_1}\right)^{\frac{2k}{\alpha}}}{k! }. 
\end{align}

According to \eqref{pm1},   the outage probability for $\text{CS}_m$ to decode $f_1$ is given by 
\begin{align}\nonumber
\mathrm{P}_{m,1}   =\mathrm{P}\left( z_t<\frac{\epsilon_1}{\rho },  z_m<\frac{\epsilon_1}{\rho } \right) . 
\end{align} 
By using the fact that $r_m\leq r_n$ and again applying the marginal distribution of $r_m$, the outage probability can be straightforwardly obtained as follows: 
\begin{align} 
\mathrm{P}_{m,1}  &=\mathrm{P}\left(   z_m<\frac{\epsilon_1}{\rho } \right) 
 =e^{-\lambda_c\pi \left(\frac{\rho}{\epsilon_1}\right)^{\frac{2}{\alpha}}}\sum^{m-1}_{k=0} \frac{(\lambda_c\pi)^{k} \left(\frac{\rho}{\epsilon_1}\right)^{\frac{2k}{\alpha}}}{k! }. \label{pm1x}
\end{align}
Hence, the first part of the lemma is proved. 

The outage probability  for file $i$, $i>1$, is more complicated than the case of $f_1$. The impact of the channel condition of $\text{CS}_t$ on the outage performance of $\text{CS}_m$ can be made explicit  by expressing the  individual  event $\left\{z_m<\frac{\epsilon_i}{\rho\xi_i}\right\}$, $i>1$,  as follows:
\begin{align}\label{pr}
\left\{z_m<\frac{\epsilon_i}{\rho\xi_i}\right\} &= \left\{z_m<\frac{\epsilon_i}{\rho\left(\alpha_i^2-\epsilon_i \sum^{{M_s}}_{j=i+1}\alpha_j^2\right)}\right\} \\\nonumber &= \left\{z_m<\frac{\epsilon_i}{\rho\bar{\xi}_i\max \left\{0,  \frac{ \rho z_t  -\epsilon_1}{\rho (1+\epsilon_1)z_t }\right\}}\right\} ,
\end{align}
where the last step follows from   the fact that $P_r=\max \left\{0,  \frac{ \rho z_t -\epsilon_1}{\rho (1+\epsilon_1)z_t }\right\}$.
Recall that   $\bar{\xi}_i=\left(\beta_i -\epsilon_i \sum^{{M_s}}_{j=i+1}\beta_j  \right)$ is a constant and not a function of the channel conditions of $\text{CS}_t$.  Therefore, the outage probability of  $\text{CS}_t$ for decoding  $f_i$, $i>1$, is given by
\begin{align}
\mathrm{P}_{t,i} =&   \mathrm{P}\left( \alpha_1=1, z_t<\max\left\{\frac{\epsilon_1}{\rho\xi_1},\cdots, \frac{\epsilon_i}{\rho\xi_{i}}\right\} \right) \\\nonumber &+ \mathrm{P}\left( \alpha_1<1, z_t<\max\left\{\frac{\epsilon_1}{\rho\xi_1},\cdots, \frac{\epsilon_i}{\rho\xi_{i}}\right\} \right) .
\end{align}
Note that $\alpha_1=1$ corresponds to the event that all the power is allocated to $f_1$. Therefore, $z_t<\max\left\{\frac{\epsilon_1}{\rho\xi_1},\cdots, \frac{\epsilon_i}{\rho\xi_{i}}\right\} $ is always true if $\alpha_1=1$, and therefore, the outage probability can be simplified  as follows: 
\begin{align}
\mathrm{P}_{t,i} =&   \mathrm{P}\left( \alpha_1=1 \right) \\\nonumber &+ \mathrm{P}\left( \alpha_1<1, z_t<\max\left\{\frac{\epsilon_2}{\rho\xi_2},\cdots, \frac{\epsilon_i}{\rho\xi_{i}}\right\} \right)  .
\end{align}
Note that when $\alpha<1$, the expression for the event $\{z_t<\frac{\epsilon_i}{\rho\xi_i}\}$ in \eqref{pr} can be simplified as follows:
\begin{align}\label{pr1}
\left\{z_t<\frac{\epsilon_i}{\rho\xi_i}\right\}   &= \left\{z_t<\frac{\epsilon_i}{\rho\bar{\xi}_i   \frac{ \rho z_t  -\epsilon_1}{\rho (1+\epsilon_1)z_t } }\right\} .
\end{align}
Therefore, the outage probability can be rewritten as follows:
\begin{align}
\mathrm{P}_{t,i} =&   \mathrm{P}\left( \alpha_1=1 \right) \\\nonumber &+ \mathrm{P}\left( \alpha_1<1, z_t<\max\left\{\frac{\epsilon_j}{\rho\bar{\xi}_j   \frac{ \rho z_t  -\epsilon_1}{\rho (1+\epsilon_1)z_t } },2\leq j\leq i \right\} \right) \\\nonumber
&=   \mathrm{P}\left( z_t<\frac{\epsilon_1}{\rho } \right) + \mathrm{P}\left( z_t>\frac{\epsilon_1}{\rho }, z_t< \frac{\epsilon_1}{\rho} +\frac{ (1+\epsilon_1)}{\rho \phi_i} \right) .
\end{align}
By applying the marginal pdf for the $t$-th shortest distance, the outage probability for $\text{CS}_t$ to decode $f_i$ can be obtained as follows:
\begin{align}
\mathrm{P}_{t,i} & =e^{-\lambda_c\pi \left( \frac{\epsilon_1}{\rho} +\frac{(1+\epsilon_1)}{\rho \phi_i}\right)^{-\frac{2}{\alpha}}}\sum^{t-1}_{k=0} \frac{(\lambda_c\pi)^{k}\left( \frac{\epsilon_1}{\rho} +\frac{(1+\epsilon_1)}{\rho \phi_i}\right)^{-\frac{2k}{\alpha}}}{k! }.
\end{align}
Hence, the second part of the lemma is proved. 

The outage probability for  $\text{CS}_m$ to decode $f_i$, $i>1$, is the most difficult to obtain among the probabilities  shown in the lemma. This probability can be first expressed as follows: 
\begin{align}
\mathrm{P}_{m,i} =&   \mathrm{P}\left( \alpha_1=1, z_m<\max\left\{\frac{\epsilon_1}{\rho\xi_1},\cdots, \frac{\epsilon_i}{\rho\xi_{i}}\right\} \right) \\\nonumber &+ \mathrm{P}\left( \alpha_1<1, z_m<\max\left\{\frac{\epsilon_1}{\rho\xi_1},\cdots, \frac{\epsilon_1}{\rho\xi_{i}}\right\} \right) .
\end{align}
Note that $\alpha_1=1$ results in the situation that no power is allocated to $f_j$, $j>1$, which means that the event $z_m<\max\left\{\frac{\epsilon_1}{\rho\xi_1},\cdots, \frac{\epsilon_i}{\rho\xi_{i}}\right\} $ always happens, if $\alpha_1=1$. In addition, by using the fact that $r_m\leq r_t$, the outage probability can be simplified as follows:
\begin{align}\label{pQ}
\mathrm{P}_{m,i} =&   \mathrm{P}\left(z_t<\frac{\epsilon_1}{\rho } \right) \\\nonumber &+ \underset{Q_1}{\underbrace{\mathrm{P}\left( z_t>\frac{\epsilon_1}{\rho }, z_m<\max\left\{\frac{\epsilon_2}{\rho\xi_2},\cdots, \frac{\epsilon_i}{\rho\xi_{i}}\right\} \right) }}.
\end{align}
Note that $z_t>\frac{\epsilon_1}{\rho }$   guarantees $z_m> \frac{\epsilon_1}{\rho\xi_1}$, as   $z_t\leq z_m$ and   $z_t> \frac{\epsilon_1}{\rho\xi_1}$ is equivalent to $z_t>\frac{\epsilon_1}{\rho }$. However, $z_t>\frac{\epsilon_1}{\rho }$ does not  guarantee $z_m>\frac{\epsilon_j}{\rho\xi_j}$, $j>1$. Recall that conditioned on  $z_t>\frac{\epsilon_1}{\rho }$, the term $\frac{\epsilon_j}{\rho\xi_j}$, $j>1$, can be simplified  as follows:
\begin{align}
\frac{\epsilon_i}{\rho\xi_i}  &= \frac{\epsilon_i}{ \bar{\xi}_i \frac{ \rho z_t  -\epsilon_1}{  (1+\epsilon_1)z_t }}.
\end{align}

Therefore, the term $Q_1$ can be calculated as follows:
\begin{align}
Q_1&=\mathrm{P}\left( z_t>\frac{\epsilon_1}{\rho }, z_m<\max\left\{\frac{\epsilon_2}{\rho\xi_2},\cdots, \frac{\epsilon_i}{\rho\xi_{i}}\right\} \right)\\\nonumber&=
\mathrm{P}\left( z_t>\frac{\epsilon_1}{\rho }, z_m<\frac{ (1+\epsilon_1)}{ \phi_i \left( \rho   -\frac{\epsilon_1}{  z_t }\right)}\right).
\end{align}
After applying the path loss model, $z_t$ ($z_m$) can be replaced by the distance between the BS and $\text{CS}_t$ ($\text{CS}_m$), and  the outage probability can be expressed as follows: 
\begin{align}\label{Q1 last}
Q_1&= 
\mathrm{P}\left( y<\left(\frac{\epsilon_1}{\rho }\right)^{-\frac{1}{\alpha}}, x>\left(\frac{(1+\epsilon_1)}{ \phi_i \left( \rho   -\epsilon_1 y^{\alpha}\right)}\right)^{-\frac{1}{\alpha}}\right),
\end{align}
where $x$ denotes the   distance between the BS and $\text{CS}_t$ and $y$ denotes the distance between the BS and $\text{CS}_m$. 
However,  there is an extra constraint on $y$ as follows:
\begin{align}
\left(\frac{\epsilon_1}{\rho }\right)^{-\frac{1}{\alpha}}>\left(\frac{(1+\epsilon_1)}{ \phi_i \left( \rho   -\epsilon_1 y^{\alpha}\right)}\right)^{-\frac{1}{\alpha}} ,
\end{align}
which leads to the following constraint on $y$:
\begin{align}\label{constraint1}
y^\alpha>  \frac{\rho}{\epsilon_1}\left[1-\frac{1+\epsilon_1}{\epsilon_1\phi_i}\right] .
\end{align}
To better understand     this constraint, the term $\frac{1+\epsilon_1}{\epsilon_1\phi_i}$ is rewritten as follows:
\begin{align}\label{termx1}
\frac{1+\epsilon_1}{\epsilon_1\phi_i} =&  \frac{1+\epsilon_1}{\epsilon_1\min\left\{\frac{\bar{\xi}_2}{\epsilon_2}, \cdots, \frac{\bar{\xi}_{M_s}}{\epsilon_{M_s}}\right\}} \geq  \frac{1+\epsilon_1}{\epsilon_1 \frac{\bar{\xi}_{M_s}}{\epsilon_{M_s}} } = \frac{\epsilon_{M_s} 2^{R_1} }{\epsilon_1 \bar{\xi}_{M_s} } ,
\end{align}
 where  $\bar{\xi}_{M_s}\leq 1$ and $2^{R_1}\geq 1$ hold. The only uncertainty for the comparison between the term $\frac{1+\epsilon_1}{\epsilon_1\phi_i}$ and $1$ is caused by the relationship between $\epsilon_1$ and $\epsilon_{M_s}$. In the lemma, it is assumed that    $\epsilon_1\leq \epsilon_{M_s}$.  As a result,  the constraint in \eqref{constraint1} is always satisfied 
  since $\frac{1+\epsilon_1}{\epsilon_1\phi_i}\geq 1$. However, the probability in \eqref{Q1 last} also implies the following constraint:
\begin{align}
y>  \left(\frac{(1+\epsilon_1)}{ \phi_i \left( \rho   -\epsilon_1 y^{\alpha}\right)}\right)^{-\frac{1}{\alpha}}.
\end{align}
 This leads to the following constraint on $y$:
\begin{align}
y > \left(\frac{\rho \phi_i}{1+\epsilon_1+\epsilon_1\phi_i}\right)^{\frac{1}{\alpha}}\triangleq \tau_1.
\end{align}

After understanding the ranges of $x$ and $y$,  we can now apply the joint pdf to calculate the outage probability,  which yields the following:
\begin{align}
Q_1&= \int_{\tau_1}^{\tau_2}\int^{y}_{\left(\frac{(1+\epsilon_1)}{ \phi_i \left( \rho   -\epsilon_1 y^{\alpha}\right)}\right)^{-\frac{1}{\alpha}}} f_{r_m,r_t}(x,y)dx dy,
\end{align}
where $\tau_2$ is defined in the lemma. 
To facilitate the calculation of this integral, the joint pdf is rewritten as follows:
\begin{align}
f_{r_m,r_t}(x,y)=& \frac{ 4(\lambda_c\pi)^{t}}{(t-m-1)!(m-1)!}  e^{-\lambda_c \pi y^2}  \sum^{t-m-1}_{p=0} (-1)^p \\\nonumber &\times  {t-m-1\choose p} y^{2(t-m-1)-2p+1}      x^{2m+2p-1}.
\end{align}   
 Now, we can apply the joint pdf which yields the following:
\begin{align}  
Q_1=&  \frac{ 4(\lambda_c\pi)^{t}}{(t-m-1)!(m-1)!}    \sum^{t-m-1}_{p=0}(-1)^p{t-m-1 \choose p} \\\nonumber &\times \int_{\tau_1}^{\tau_2}f_m(y) dy  ,
\end{align}
where $f_m(\cdot)$ is defined in the lemma. 
One can apply  Chebyshev-Gauss quadrature to obtain the following expression for $Q_1$: 
\begin{align}\label{Q1 final} 
Q_1&\approx  \frac{ 4(\lambda_c\pi)^{t}}{(t-m-1)!(m-1)!}    \sum^{t-m-1}_{p=0}(-1)^p{t-m-1 \choose p} \\\nonumber &\times  \sum^{N}_{l=1}\frac{\pi\left(\tau_2-\tau_1\right)}{2 N} f_m\left(\frac{\tau_2-\tau_1}{2}w_l+\frac{\tau_2+\tau_1}{2}\right)\sqrt{1-w_l^2}.
\end{align}
Substituting  \eqref{Q1 final} and  \eqref{pm1x} into \eqref{pQ}, the third part of the lemma is proved.  

\section{Proof of Lemma \ref{lemma3}}

  Because the two users associated with the same content server are located in different regions inside the disc with radius $\mathcal{R}_c$, the density functions for their channel gains are different, and therefore, the two users' outage probabilities will be calculated separately in the following subsections.  
  
  \subsubsection{The outage performance at $\text{U}_{m,2}$} First define the composite channel gain as $z_{m,k} \triangleq \frac{|h_{m,mk}|^2}{L\left(||y_{m,k}||\right)}$, for $k\in\{1, 2\}$.  Recall that, for a user which is uniformly distributed in a disc with radius $r$, the CDF of its composite channel gain which includes  the effects of  small scale Rayleigh fading and path loss can be expressed as follows \cite{Nomading}:
\begin{eqnarray}\label{cdf11}
F_{r}(z) \approx   \sum^{N}_{n=1}\bar{w}_n  \left(1-e^{-c_{n,r}z}\right),
\end{eqnarray}
and the corresponding pdf of the channel gain is $f_{r}(z) \approx  \sum^{N}_{n=1}\bar{w}_n c_{n,r} e^{-c_{n,r}z}$.   Recall that $\text{U}_{m,2}$ is uniformly distributed in a disc with radius $\mathcal{R}_s$, and therefore, the CDF and pdf of the channel gain of $\text{U}_{m,2}$ are simply given by  $F_{\mathcal{R}_s}(z)$ and $f_{\mathcal{R}_s}(z)$ by replacing $r$ with $\mathcal{R}_s$.  The reason for using   the approximated form in \eqref{cdf11} is   that both the approximated CDF and pdf  are in the form of exponential functions. In the following, we will show that these exponential functions will signficiantly simplify the application of  the probability generating functional (PGFL). 

With the definition of  $z_{m,k} \triangleq \frac{|h_{m,mk}|^2}{L\left(||y_{m,k}||\right)}$, the SINR of $\text{U}_{m,2}$ for decoding  the first message, $f_{m,1}$, is given by
\begin{align}
 \text{SINR}_{m,2}^1 = \frac{\alpha^2_1z_{m,2}}{ \alpha^2_2z_{m,2}     +\text{I}^{m,2}_{inter}
 +\frac{1}{\rho}}.
\end{align}
Similarly,  the SINR of $\text{U}_{m,2}$ for decoding  its own message, $f_{m,2}$, can be rewritten as follows: 
\begin{align}
 \text{SINR}_{m,2}^2 = \frac{\alpha^2_2z_{m,2}}{ \text{I}^{m,2}_{inter}
 +\frac{1}{\rho}}.
\end{align}
Therefore,  the outage probability of $\text{U}_{m,2}$ for decoding its own  message can be expressed as follows:
\begin{align}
\mathrm{P}^o_{m,2}  =& 1 - \mathrm{P}\left(\log(1+\text{SINR}^l_{m,2})>R_l, l\in\{1,2\} \right)\\ \nonumber
 =& \mathcal{E}_{\text{I}^{m,2}_{inter}}\left\{ \mathrm{P}\left( z_{m,2}<\max\left\{\frac{\epsilon_1 \text{I}^{m,2}_{inter}+\frac{\epsilon_1}{\rho}}{ \alpha^2_1-\epsilon_1\alpha^2_2 },  \frac{\epsilon_2 \text{I}^{m,2}_{inter}+\frac{\epsilon_2}{\rho}}{  \alpha^2_2 }\right\}\right) \right\},
\end{align}
where $ \mathcal{E}_{x}\{\cdot\}$ denotes the expectation operation  with respect to $x$.
In order to facilitate the application of the PGFL, the outage probability  is first rewritten as follows:
\begin{align}
\mathrm{P}^o_{m,2}   
 =& \mathcal{E}_{\text{I}^{m,2}_{inter}}\left\{ \mathrm{P}\left( z_{m,2}<\max\left\{\frac{ \text{I}^{m,2}_{inter}+\frac{1}{\rho}}{ \frac{\alpha^2_1-\epsilon_1\alpha^2_2}{\epsilon_1} },  \frac{ \text{I}^{m,2}_{inter}+\frac{1}{\rho}}{ \frac{ \alpha^2_2 }{\epsilon_2}}\right\}\right) \right\}\\\nonumber
  =& \mathcal{E}_{\text{I}^{m,2}_{inter}}\left\{ \mathrm{P}\left( z_{m,2}<\frac{ \text{I}^{m,2}_{inter}+\frac{1}{\rho}}{\min\left\{ \frac{\alpha^2_1-\epsilon_1\alpha^2_2}{\epsilon_1} ,  \frac{ \alpha^2_2 }{\epsilon_2}\right\}}\right) \right\}. 
\end{align}

After using the approximated expression for the pdf of $z_{m,2}$, the outage probability can be approximated as follows: 
\begin{align}
\mathrm{P}^o_{m,2}  \approx&  \mathcal{E}_{\text{I}^{m,2}_{inter} }\left\{ \sum^{N}_{n=1}\bar{w}_n  \left(1-e^{-c_{n,\mathcal{R}_s}\frac{  \text{I}^{m,2}_{inter}+\frac{1}{\rho}}{\min\left\{ \frac{\alpha^2_1-\epsilon_1\alpha^2_2}{\epsilon_1} ,  \frac{ \alpha^2_2 }{\epsilon_2}\right\}}}\right) \right\}
 \\\nonumber
 \approx& 1- \sum^{N}_{n=1}\bar{w}_ne^{-\frac{c_{n,\mathcal{R}_s}\frac{1}{\rho}}{\min\left\{ \frac{\alpha^2_1-\epsilon_1\alpha^2_2}{\epsilon_1} ,  \frac{ \alpha^2_2 }{\epsilon_2}\right\}}}   \mathcal{E}_{\text{I}^{m,2}_{inter} }\left\{ e^{-\frac{c_{n,\mathcal{R}_s} \text{I}^{m,2}_{inter}}{\min\left\{ \frac{\alpha^2_1-\epsilon_1\alpha^2_2}{\epsilon_1} ,  \frac{ \alpha^2_2 }{\epsilon_2}\right\}}}  \right\} .
 \end{align}

Denote the Laplace transform of   $\text{I}^{m,2}_{inter}$ by $\mathcal{L}_{\text{I}^{m,2}_{inter}}(s)$.  Then, the outage probability can be  rewritten as follows:
\begin{align}\label{eqxc3}
\mathrm{P}^o_{m,2}  \approx&     1- \sum^{N}_{n=1}\bar{w}_ne^{-\frac{c_{n,\mathcal{R}_s}\frac{1}{\rho}}{\min\left\{ \frac{\alpha^2_1-\epsilon_1\alpha^2_2}{\epsilon_1} ,  \frac{ \alpha^2_2 }{\epsilon_2}\right\}}}   \mathcal{L}_{\text{I}^{m,2}_{inter}}\left(\frac{c_{n,\mathcal{R}_s}  }{\min\left\{ \frac{\alpha^2_1-\epsilon_1\alpha^2_2}{\epsilon_1} ,  \frac{ \alpha^2_2 }{\epsilon_2}\right\}}\right)  .
 \end{align} 
 Therefore, the outage probability can be calculated if  the Laplace transform of   $\text{I}^{m,2}_{inter}$ is known.  
 Particularly, the Laplace transform of   $\text{I}^{m,2}_{inter}$,  $\mathcal{L}_{\text{I}^{m,2}_{inter}}(s)$, can be first expressed   as follows:
\begin{align}\nonumber 
\mathcal{L}_{\text{I}^{m,2}_{inter}}(s) &= \mathcal{E} \left\{ \prod_{x_j\in \Phi_c\backslash x_m} {\rm exp}\left( -s \frac{|h_{j,m2}|^2}{ {L\left(||y_{m,2}+x_m-x_j||\right)}}\right)\right\}.
\end{align}
By using the assumption that $h_{j,m2}$ is Rayleigh distributed, the small scale fading gain can be averaged out in the expression, and the Laplace transform can be expressed as follows:
\begin{align} 
\mathcal{L}_{\text{I}^{m,2}_{inter}}(s) &= \mathcal{E} \left\{ \prod_{x_j\in \Phi_c\backslash x_m} \frac{1}{ \frac{s}{ {L\left(||y_{m,2}+x_m-x_j||\right)}}+1}\right\}.
\end{align} 

By applying the Campell theorem and the PFGL \cite{Haenggi, 5560889, 7996589}, the Laplace transform can be simplified as follows:
\begin{align} 
\mathcal{L}_{\text{I}^{m,2}_{inter}}(s) =& {\rm exp}\left(-\lambda_{c} \int_{\mathbb{R}^2}\left(1 -\mathcal{E}_{y_{m,2}}\left\{\frac{1}{ \frac{s}{ {L\left(||y_{m,2}+x_m-x||\right)}}+1}\right\} \right) dx \right),
\end{align}
which contains a 2-D integral with respect to a HPPP point $x$. 
Denote the pdf of $y_{m,2}$, $y_{m,2}\in \mathcal{B}(x_m,\mathcal{R}_s)$, by $f_{y_{m,2}} (y)$, where we recall that $\mathcal{B}(x_m,\mathcal{R}_s)$ denotes the   disc with radius $\mathcal{R}_s$ and its  origin   located at $x_m$. Therefore, the Laplace transform can be expressed as follows:  
\begin{align} 
\mathcal{L}_{\text{I}^{m,2}_{inter}}(s) =& {\rm exp}\left(-\lambda_{c} \int_{\mathcal{B}(x_m,\mathcal{R}_s)}f_{y_{m,2}} (y) \int_{\mathbb{R}^2} \left(1   - \frac{1}{ \frac{s}{ {L\left(||y +x_m-x||\right)}}+1} \right) dx dy\right).
\end{align}
Following  similar  steps as  in \cite{Haenggi, 5560889, 7110502,7996589}, the substitution of $y+x_m-x\rightarrow x'$ can be used to simplify the expression of the Laplace transform as follows: 
\begin{align}
\mathcal{L}_{\text{I}^{m,2}_{inter}}(s) =& {\rm exp}\left(-\lambda_{c} \int_{\mathcal{B}(x_m,\mathcal{R}_s)}f_{y_{m,2}} (y) \int_{\mathbb{R}^2} \left(1  - \frac{1}{ \frac{s}{ {L\left(||x'||\right)}}+1}  \right) dx'dy \right)\\ \nonumber  =& {\rm exp}\left(-\lambda_{c} \int_{\mathcal{B}(x_m,\mathcal{R}_s)}f_{y_{m,2}} (y) 2\pi \int_0^\infty  \left(1  - \frac{1}{ \frac{s}{ {L\left(r\right)}}+1} \right) rdrdy \right). 
\end{align}
After applying the  Beta function \cite{GRADSHTEYN}, the Laplace transform can be obtained as follows:
 \begin{align}\nonumber
\mathcal{L}_{\text{I}^{m,2}_{inter}}(s)   =& {\rm exp}\left(-\lambda_{c} \int_{\mathcal{B}(x_m,\mathcal{R}_s)}f_{y_{m,2}} (y) 2\pi \frac{s^{\frac{2}{\alpha}}}{\alpha}\text{B}\left(\frac{2}{\alpha}, \frac{\alpha-2}{\alpha}\right)dy \right)
\\  =& \label{eqxc2}
{\rm exp}\left(-   2\pi\lambda_{c} \frac{s^{\frac{2}{\alpha}}}{\alpha}\text{B}\left(\frac{2}{\alpha}, \frac{\alpha-2}{\alpha}\right) \right),
\end{align}
where the last equality follows from the fact that the integral with respect to $y$ is not a function of $x$.  Substituting  \eqref{eqxc2} into \eqref{eqxc3}, the first part of the lemma is proved.

\subsubsection{The outage performance at $\text{U}_{m,1}$} Recall that  $\text{U}_{m,1}$ is located inside a ring with $ \mathcal{R}_s$ as the inner radius and $\mathcal{R}_c$ as the outer radius. Therefore,  the CDF of this user's channel gain needs to  be calculated differently compared to that of  $\text{U}_{m,2}$. First, by using the assumptions that the user is uniformly distributed inside the ring and the fading gain is Rayleigh distributed,   the CDF of $z_{m,1}$ can be expressed as follows \cite{Wangpoor11}:
\begin{align}
F_{z_{m,1}}(z) =&\frac{2}{\mathcal{R}_c^2-\mathcal{R}_s^2}\int^{\mathcal{R}_c}_{\mathcal{R}_s} \left(1- e^{-r^\alpha z}\right)rdr\\\nonumber = &  \frac{1}{\mathcal{R}_c^2-\mathcal{R}_s^2}\left[\mathcal{R}_c^2\frac{2}{\mathcal{R}_c^2}\int^{\mathcal{R}_c}_{0} \left(1- e^{-r^\alpha z}\right)rdr\right. \\\nonumber &\left.-\mathcal{R}_s^2\frac{2}{\mathcal{R}_s^2}\int^{\mathcal{R}_s}_0 \left(1- e^{-r^\alpha z}\right)rdr\right].
\end{align}
Comparing this with   \cite[Eq. (3)]{Nomading}, one can find that the approximated form shown in \eqref{cdf11} can be applied to each term in the above expression, and hence the CDF can be approximated as follows: 
\begin{align}
F_{z_{m,1}}(z) =  &  \frac{1}{\mathcal{R}_c^2-\mathcal{R}_s^2}\left[\mathcal{R}_c^2F_{\mathcal{R}_c}(z) -\mathcal{R}_s^2F_{\mathcal{R}_s}(z)\right].
\end{align}
Following  similar  steps as in the previous subsection, the outage probability of $\text{U}_{m,1}$ for decoding $f_{m,1}$ can be obtained as follows\begin{align}
\mathrm{P}^o_{m,1}   
 =& \mathcal{E}_{\text{I}^{m,1}_{inter}}\left\{ \mathrm{P}\left( z_{m,1}< \frac{\epsilon_1 \text{I}^{m,1}_{inter}+\frac{\epsilon_1}{\rho}}{\alpha^2_1-\epsilon_1\alpha^2_2}\right) \right\}. 
\end{align} 
After using the approximated expression for the pdf of $z_{m,1}$, the outage probability can be approximated   as follows:
\begin{align}\nonumber
\mathrm{P}^o_{m,1}  \approx& \frac{\mathcal{R}_c^2}{\mathcal{R}_c^2-\mathcal{R}_s^2}  \mathcal{E}_{\text{I}^{m,1}_{inter} }\left\{ \sum^{N}_{n=1}\bar{w}_n  \left(1-e^{-c_{n,\mathcal{R}_c}\frac{\epsilon_1 \text{I}^{m,1}_{inter}+\frac{\epsilon_1}{\rho}}{\alpha^2_1-\epsilon_1\alpha^2_2}}\right) \right\}\\\nonumber  &-
 \frac{\mathcal{R}_s^2}{\mathcal{R}_c^2-\mathcal{R}_s^2}  \mathcal{E}_{\text{I}^{m,1}_{inter} }\left\{ \sum^{N}_{n=1}\bar{w}_n  \left(1-e^{-c_{n,\mathcal{R}_s}\frac{\epsilon_1 \text{I}^{m,1}_{inter}+\frac{\epsilon_1}{\rho}}{\alpha^2_1-\epsilon_1\alpha^2_2}}\right) \right\}
 \\  \label{eqxc1}
 \approx& 1+\frac{\mathcal{R}_s^2}{\mathcal{R}_c^2-\mathcal{R}_s^2} \sum^{N}_{n=1}\bar{w}_ne^{-\frac{c_{n,\mathcal{R}_s}\frac{\epsilon_1}{\rho}}{\alpha^2_1-\epsilon_1\alpha^2_2}}   \mathcal{E}_{\text{I}^{m,1}_{inter} }\left\{ e^{-\frac{c_{n,\mathcal{R}_s}\epsilon_1 \text{I}^{m,1}_{inter}}{\alpha^2_1-\epsilon_1\alpha^2_2}}  \right\} \\\nonumber &- \frac{\mathcal{R}_c^2}{\mathcal{R}_c^2-\mathcal{R}_s^2} \sum^{N}_{n=1}\bar{w}_ne^{-\frac{c_{n,\mathcal{R}_c}\frac{\epsilon_1}{\rho}}{\alpha^2_1-\epsilon_1\alpha^2_2}}   \mathcal{E}_{\text{I}^{m,1}_{inter} }\left\{ e^{-\frac{c_{n,\mathcal{R}_c}\epsilon_1 \text{I}^{m,1}_{inter}}{\alpha^2_1-\epsilon_1\alpha^2_2}}  \right\} .
 \end{align}
It is straightforward to show that the Laplace transform of $\text{I}^{m,1}_{inter}$ is the same as that of $\text{I}^{m,2}_{inter}$. Therefore, substituting  \eqref{eqxc2} with \eqref{eqxc1}, the second part of the lemma is proved. 

   \bibliographystyle{IEEEtran}
\bibliography{IEEEfull,trasfer}

  \end{document}